\def \angle {\varphi}
\def \action {I}
\def \twist {\omega}
\def \pert {g}
\def \sgn {\mathrm{sgn}}
\def \A {\alpha}
\def \Z {\mathsf{z}}
\def \fund {\Phi}
\def \trans {T}
\def \Kes {\mu}
\def \pert {f}
\def \DS {\Delta S}
\def\I {\mathcal{I}}
\def\H {\mathscr H}
\newtheorem{theorem}{\sc{Theorem}}
\newtheorem{lemma}{\sc{Lemma}}
\newtheorem{proposition}{\sc{Proposition}}
\newtheorem{remark}{\sc{Remark}}
\date{\today}
\begin{document}

\title[Pinball dynamics]{Pinball dynamics: unlimited energy growth in switching Hamiltonian systems}

\author{Maxim Arnold}

\address{Maxim Arnold \\Department of Mathematics \\
University of Illinois \\
Urbana, IL 61801, U.S.A. and Institute for Information Transition Problems, Rus. Acad. Sci., Moscow, Russia.}

\email{mda@uiuc.edu}

\author{Vadim Zharnitsky}

\address{Vadim Zharnitsky \\ Department of Mathematics \\
University of Illinois \\
Urbana, IL 61801, U.S.A. }

\email{vzh@illinois.edu}

\begin{abstract}
A family of discontinuous symplectic maps on the cylinder is considered. 
This family arises naturally  in the study of nonsmooth Hamiltonian dynamics and 
in switched Hamiltonian systems.  The transformation depends on two parameters and 
is a canonical model for the study of  bounded and unbounded behavior in 
discontinuous area-preserving  mappings due to nonlinear resonances.
This paper provides a general description of the map and points out its connection with another map considered 
earlier by Kesten. In one special case,  an unbounded orbit is explicitly constructed. 
\end{abstract}
\maketitle
\section{Introduction.}

Theory of small perturbations of completely integrable Hamiltonian systems has a long  history 
that goes back to 19th century effort to explain  stability of planets. 
The  major breakthrough occurred in the late 1960s, when Kolmogorov-Arnold-Moser (KAM) theory was 
created.

The KAM theory states that under some non-degeneracy conditions,  stable motion persists  in a 
completely integrable Hamiltonian system under sufficiently small and smooth perturbation. 

For the original application, 3-body problem, the smoothness was not an issue as 
the gravity force is  analytic, outside of a small set of singularities. 
However, further applications of KAM theory to stability problems in physics and engineering, 
do require limited smoothness assumptions and also weaker forms of the so-called twist 
(nonlinearity) conditions. 

The degree of smoothness of the perturbation has a crucial role in the theory. 
In his famous ICM lecture Kolmogorov gave an outline of the theory where  he required  analyticity. 
Shortly,  V.I. Arnold proved Kolmogorov's statement,  also under the assumption  of analyticity. 
Independently, combining Kolmogorov's method with Nash smoothing technique, Moser proved a KAM type 
theorem requiring $333$  derivatives. 
Subsequently, the smoothness requirement was  reduced to single digits $(C^3)$ and several 
counterexamples have been found for lower regularity maps, see {\em e.g.} \cite{Herman}. 

Moser proved his theorem for the case of area-preserving monotone 
twists maps of the annulus. In this article we also restrict our attention to the 
representative case of twist maps on the plane, which corresponds to the periodically forced 
Hamiltonian systems with one degree of freedom. 

The above KAM counterexamples, that were constructed for the general twist maps,  do not provide a tool 
to decide stability in specific  physics problems. 
Therefore, it is important to investigate special maps arising in applications.

We note that even in the most extreme case of discontinuous maps, the stability problem is already nontrivial.
In the next section, we review  several such systems where boundedness problem for 
discontinuous maps naturally arises. Then we introduce a simple family of discontinuous twist maps, which  
captures the essential properties of those examples. The family contains a natural physical system which we 
call pinball transformation. The hallmark of the pinball map is the small twist, which on the one  hand 
frequently occurs in applications, and on the other hand makes stability problem rather delicate.  

In higher dimension, even the presence of KAM tori does not assure stability. The energy growth in smooth Hamiltonian systems in higher dimensions is an active area of research, see {\em e.g.} \cite{kaloshin, MF}.

\section{ Discontinuous twist maps and $\A \Z$-transformation.}

Discontinuous maps arise naturally in Hamiltonian systems with impacts, such as Fermi-Ulam problem, billiards, 
and more recently in hybrid or switched systems. It is usually the case that under the additional 
smoothness assumptions, KAM theory applies assuring boundedness of energy in all those problems.

One should also keep in mind that while the general monotone twist maps are characterized by a function 
of two variables $h(x_1,x_2)$,  these particular examples correspond to symplectic maps  characterized by 
function of one variable, {\em e.g.} for billiards $h(x_1,x_2) = ||x_2-x_1||$. Such a restriction 
makes it nontrivial to construct physically meaningful escaping trajectories.
 
For the readers' convenience, now we briefly  describe several such systems.

\subsubsection*{Example 1: Particle in  square wave switching potential}

Hybrid or switching systems  is an active area of research in applied mathematics and engineering sciences,
see {\em e.g.} \cite{LevKal,liberzon,ADDPR}. A prototype example of a switching system, where boundedness 
problem is already non-trivial, is a classical  particle in square wave periodic potential which changes 
the sign, periodically in time. 

More precisely, let the potential be  $V(x) = (-1)^{[x]}$ and assume 
the potential is switched every second  $V(x,t) = (-1)^{[t]}  \cdot (-1)^{[x]}$. While such potential 
is not differentiable, there is a natural way to define the dynamics by using the energy relation: the kinetic 
energy changes by 2 if the particle passes $t\in {\mathbb Z}$ integer points. It is common to ignore 
the singular subset of the extended phase space $(t,\dot x, x)$ where there is discontinuity in 
both time and space and the dynamics is not defined. Such subset has zero measure. Outside the singular set, the 
particle moves with constant speed $v=\sqrt{E\pm 1}$, gaining or losing energy by two at each switching, 
see the appendix for more details.

\subsubsection*{Example 2: Fermi-Ulam accelerator}
The Fermi-Ulam system consists of a classical particle bouncing between two 
periodically moving walls. The application of KAM theory shows that velocity (or energy) of 
the particle is uniformly 
bounded $|\dot v|< C(v(0))$, provided the periodically moving wall's position is sufficiently 
smooth $p(t) \in C^5(0,T)$, see \cite{laederich_levi}.

Fermi-Ulam problem can be reduced to a particle traveling in a periodic non-smooth potential
\[
\ddot x +  V^{\prime}(x,t) = 0, \,\,\, {\rm where} \,\,\,  V^{\prime} \in L^1({\mathbb T}^2). 
\]

It turns out that lack of  smoothness in $x$ ({\em e.g.} due to the presence of the wall in Fermi-Ulam problem) 
does  not destroy bounded behavior  as one can exchange the role of time and coordinate and then obtain a smooth 
monotone twist map by integrating over $x$, see {\em e.g.} \cite{levi}. 

If there is lack of smoothness in  both space and time in the periodic potential problem, then KAM theorems do not apply.

In the worst case the map is discontinuous, but even then, finding unbounded solutions could be challenging. 
One case, however, is more tractable: if jumps in the velocity (energy) are so large that the solution makes 
full revolution over 
one period of forcing so it will be in tune for the next velocity increase. A typical example would be given by this map 
\begin{equation}
\begin{cases}
x_1 = (x + y) \mod{1}\\
y_1 = y + \sgn(x_1-\frac 12).
\end{cases}
\end{equation}
Such scenario takes place in Fermi-Ulam problem if $p(t)$ has saw-tooth like shape. But, if the velocity increments are smaller,
then the twist will eventually detune the solution out of the resonance. 

\subsubsection*{Example 3. Outer Billiards.}
The question of boundedness becomes a lot more delicate and 
there are few  examples of escaping trajectories  for such systems in the dual billiards. Only recently, 
Schwartz and then Dolgopyat and Fayad constructed  unbounded solutions in the presence of  piecewise smooth boundary.
In the appendix, we give some heuristic description how our discontinuous twist map is related to  this example. \\

\section*{$\A\Z$--Map:  A model of boundedness problem for  discontinuous twist maps. }

\noindent
In this paper, we introduce a  two-parameter family of discontinuous
 monotone twist maps that seems to capture the essential difficulties of  several  switching-like (discontinuous) systems. 
The map is given by
\begin{equation}
\label{eq:AZ}
\begin{cases}
x_1 = (x + {\A} y^\Z ) \mod{1}\\
y_1 = y + \sgn(x_1-\frac 12)
\end{cases}
\end{equation}
and will be referred to as $\A \Z$-map, where $\A$ and $\Z$ are parameters. Note, that the map is invariant with respect to the natural scaling: varying the amplitude of the changes in the second variable or varying the length of the base circle in the first variable will lead to the equivalent system 
with different values of parameter $\A$.  We also observe that $\A\Z$ transformation preserves 
the unit-step lattice in action variable. In other words, the action variable is quantized for any fixed initial condition. 
 
For different values of parameters $\A\Z$ map corresponds to some natural systems:
\begin{itemize} 
\item $\Z$ = 1, \,\,\,   Fermi-Ulam with saw-tooth $p(t)$, discontinuous standard map.
\item $\Z$ = 0, \,\,\,   Erd\"os-Kesten  system (skew product of irrational rotation with jumps), which 
is defined in the next section. 
\item $\Z$ = 1/2, \,\,\, particle in switching square wave periodic potential.
\item $\Z$ = -1, \,\,\,  pinball problem, which is studied in this paper.
\end{itemize}

We explain in more details how $\A\Z$-transformation arises in each of these examples in the appendix.

\subsubsection*{Zero  twist example. Erd\"os-Kesten system.}
The following system was introduced by Erd\"os  and studied by Kesten \cite{Kesten} independently of 
any  KAM theory-type of problems. 
Erd\"os considered irrational rotation on the circle and asked what is the discrepancy between the orbit visiting 
different open subsets of the circle having equal measure. In particular, one can consider two halves of the circle $x\in (0,1/2)$ and $x \in (1/2,1)$.  In our notation, his system corresponds to the map with $\Z=0$.

In this  degenerate case, there is no twist in the system and the dynamics is a  skew product.  
Thus,  one can easily provide a set of values of parameter $\A$ (e.g.~$\A=1$) for which there are unbounded orbits. 
On the other hand for $\A=\dfrac 12$ any trajectory of the system \eqref{eq:AZ} is bounded since any point has period exactly $2$.  
In the generic case of irrational values of $\A$, Erd\"os' question leads to  an interesting number-theoretic problem. 
General result can be found in the paper by Kesten \cite{Kesten} where it is stated that for almost every $\A$ there is a set of
positive measure of orbits which escape to infinity but return to zero infinitely often. 
Most contemporary analysis of this phenomena can be found in \cite{Ralston}.     

Surprisingly, Erd\"os-Kesten (EK) system becomes important in the study of discontinuous twist maps after an appropriate renormalization procedure is carried out. 

\subsubsection*{Elementary properties of $\A\Z$-map} For non-degenerate twist $\Z\neq 0$ the following properties hold: 
\begin{itemize}
\item For any 
$\Z<-1$ nearly half of trajectories of the system \eqref{eq:AZ} escapes to infinity. It immediately follows from the fact that $\sum\limits_{n} n^\Z$ converges.

\item Fix $\Z\in \mathbb{N}$, then for $\A=1$, it is easy to verify that half the orbits are unbounded and for $\A=\frac 12$ 
all the trajectories are periodic.  

\item The most interesting and difficult problem of boundedness occurs for  $z\in [-1,1)$.
\end{itemize}

\section{Pinball system}
Now, we describe a simple mechanical system that corresponds to the case $z=-1$.
Consider now Fermi-Ulam like system with the fixed walls, but with one of the  walls containing a pinball mechanism: 
the momentum of the particle increases or decreases when it hits the wall according to the following law:
\begin{equation}
\begin{cases}
v\rightarrow v+1 \,\, {\rm if} \,\,   t\in [0,1/2]  \,\,   ({\rm mod} \,\, 1)   \\
v\rightarrow v-1 \,\, {\rm if} \,\,   t\in [1/2,1]  \,\,   ({\rm mod} \,\, 1),   
\end{cases}
\end{equation}
{\em i.e.} the momentum is increased (decreased) during the first (second) half period. 
This dynamics is described by the map with $\A$ being the distance between the walls and $\Z=-1$.

We rewrite the system \eqref{eq:AZ} for $\Z=-1$ and it will be called the pinball transformation that will be denoted by ${\bf P}$. 
For the sake of  clarity,  it would be more convenient to consider the base circle $\angle\in[0, 2)$.
\begin{equation}
\label{eq:pinball}
\begin{cases}
\angle_1=\left(\angle+\dfrac{\A}{\action}\right)\mod{2}
\\
\action_1=\action+\sgn(\angle_1-1)
\end{cases}
\end{equation}

Numerical experiments show that for typical values of parameter $\A$, the trajectory of the system \eqref{eq:pinball} is nearly recurrent 
for a long time and moreover approximates some piecewise smooth function having singularities only at the discontinuity lines 
$\angle=1$ and $\angle=0$.  

\begin{figure}[ht]
\centering
\includegraphics[height=2.5 in]{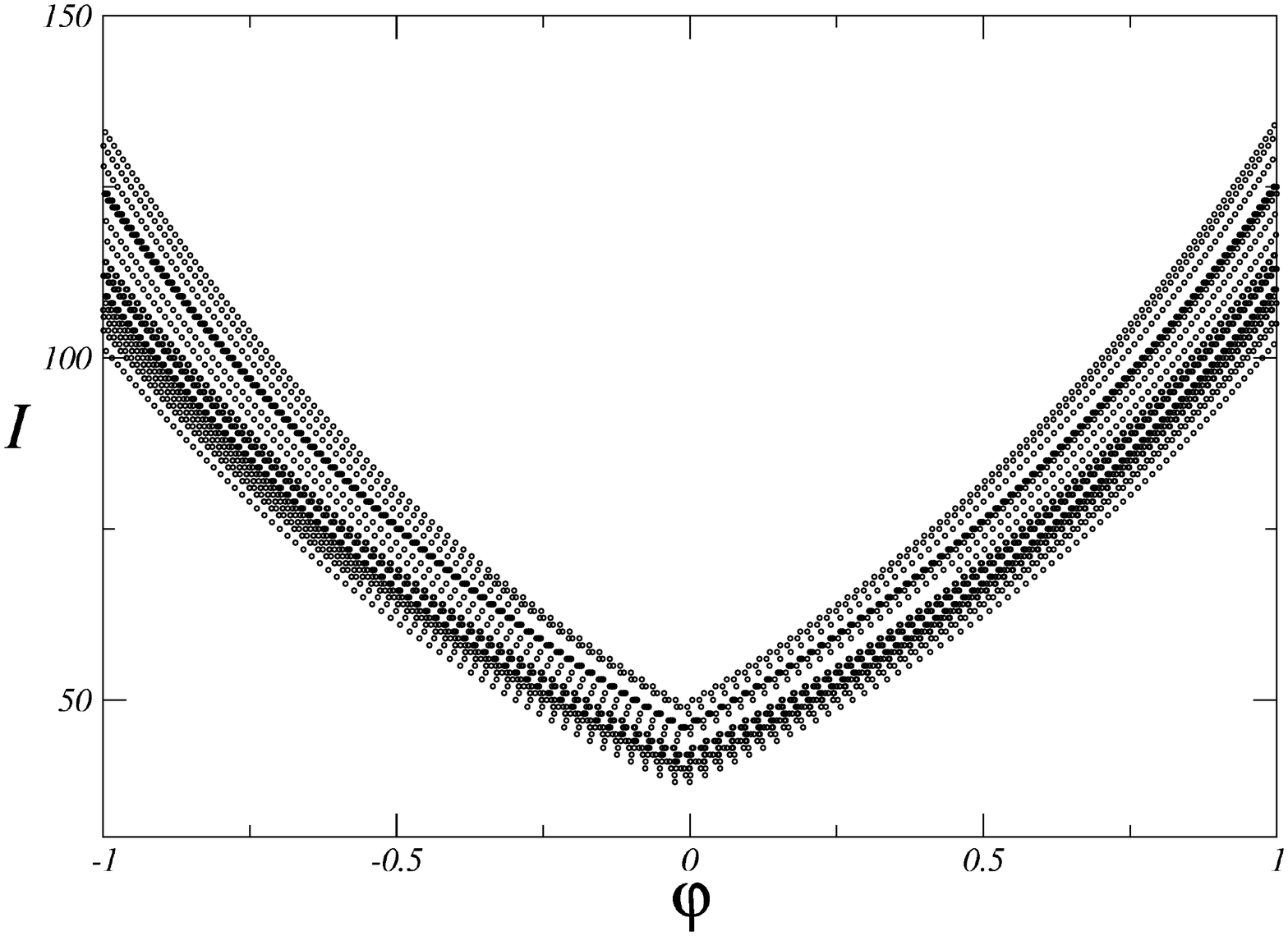}
\includegraphics[height= 2 in]{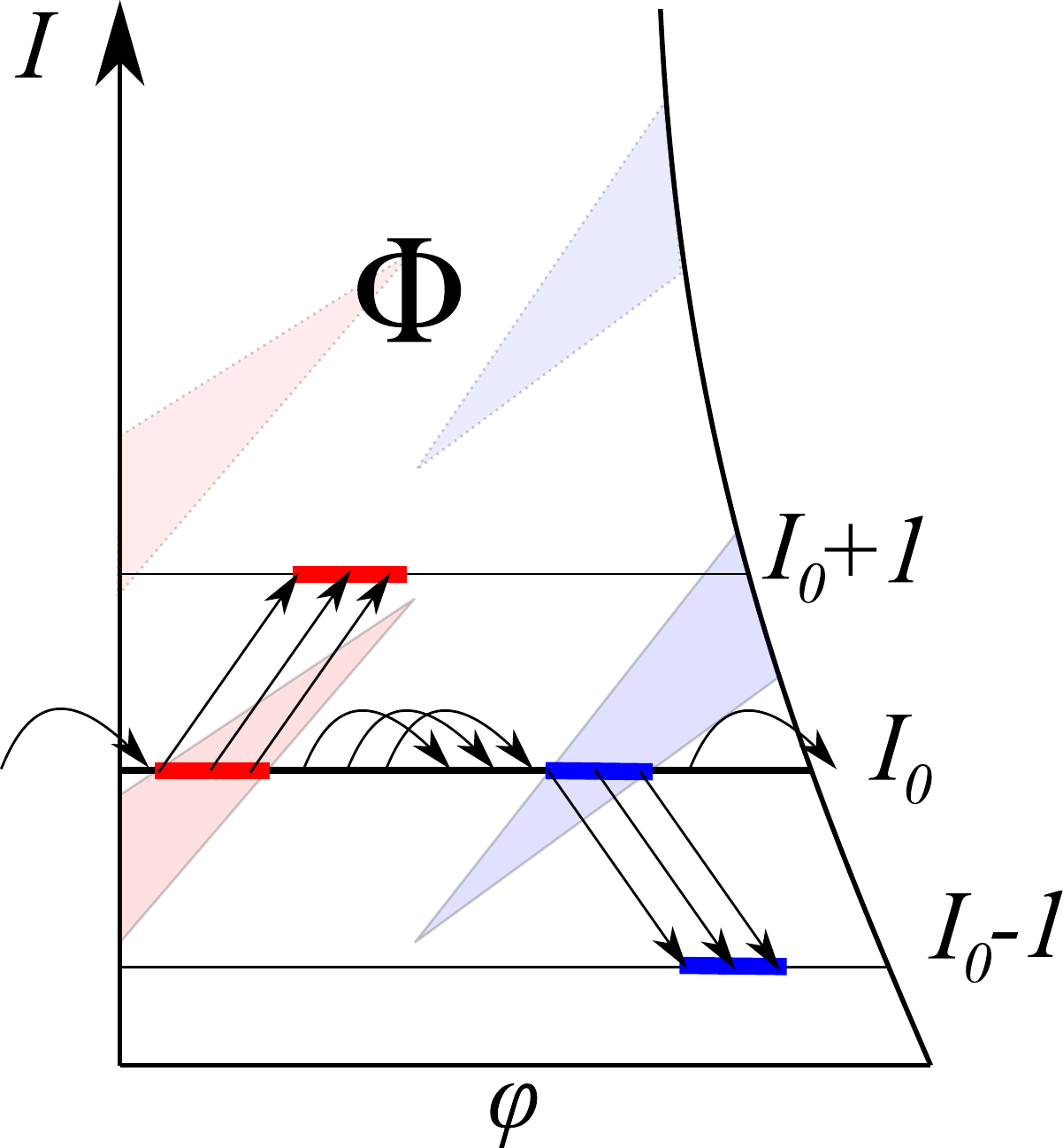}

\caption{Left: typical trajectory in a phase space: $10^6$ iterations starting at $(\angle_0,\,\action_0)=(0.01, 50)$.  Right: First return map into fundamental domain $\Phi$. Schematic description of domains of positive, neutral 
and negative growth.}
\end{figure}

Our goal is  to explain this  behavior by renormalizing the induced transformation in the so-called fundamental domain $\fund$. 
The fundamental domain is related to  Poincar\' e section for  flows in the sense that any orbit returns to it. In the Pinball map, we define 
 the fundamental domain as the set of points $(\angle,\action)$ located between singular line $\angle =0$ and its image
\begin{equation}
\fund = \{ (\angle,\action) \in [0,2) \times [0,\infty], 0< \angle < {\rm Proj_{\angle}}{\bf P}(0,I) \}.
\end{equation} 
The angular coordinate on $\fund$ will be denoted by $\tilde \angle \in [0,1]$.

Our first statement concerns the  asymptotic description of the first return  map:

\begin{theorem}
\label{th: Main} Let $\mu =\exp (\A^{-1})$ and assume $0 < \mu < 3$. Let $\chi_\mu^{(0)}$ and $\chi_\mu^{(1)}$ be 
characteristic functions of the $\dfrac 1\mu$ neighborhoods of the boundary of the fundamental domain, correspondingly, i.e. \mbox{$\chi_\mu^{(0)} = \chi_{[0,\mu^{-1}]}$} and $\chi_\mu^{(1)} = \chi_{[1-\mu^{-1}, 1]}$. 

Then, the  first return map of the domain $\fund$ under the map \eqref{eq:pinball} is a $O(\action^{-1})$ perturbation of the transformation 
\[\begin{cases}
\tilde \angle_1= \tilde \angle+\dfrac{2}{\mu} \, g_\mu(\action, \tilde \angle) \mod 1\\
\action_1=\action+\chi_+(\tilde \angle_1)-\chi_-(\tilde \angle_1)
\end{cases}\]
where  $[\cdot]$ denotes an integer part and \[g_\mu(\action,\angle)=\H_\mu(\action-1-\chi_\mu^{(1)}(\angle)) +\frac 12(1+\chi_++\chi_-)-(\chi_\mu^{(0)}+\chi_\mu^{(1)}) \] where
\[\H_\mu(x)=\chi_{\geqslant 0}(\mu + 2 \{ \mu x \} -3)-\{\mu x\}.\] 
Functions $\chi_+$ and $\chi_-$ are characteristic functions of two intervals $\I_+$ and $\I_-$ respectively. 
\end{theorem}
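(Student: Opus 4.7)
The plan is to follow a single orbit from $\fund$ through one full pass around the cylinder and read off the induced action on the rescaled coordinates $(\tilde\angle,\action)$ modulo terms of order $\action^{-1}$. I would parametrize a point of $\fund$ at action level $\action_0$ as $\angle_0=\tilde\angle_0\,\A/\action_0$ with $\tilde\angle_0\in[0,1]$. Because $\Z=-1$, the map $\mathbf{P}$ decreases $\action$ by one on every step with $\angle_k\in[0,1)$ and increases it by one on $[1,2)$; consequently an orbit starting in $\fund$ executes a descending run until it first crosses $\angle=1$, then an ascending run until it wraps past $\angle=2$ and re-enters a neighborhood of $\fund$. The proof then reduces to an asymptotic analysis of these two runs.

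In the descending run, $\action_k=\action_0-k$, so
\[
\angle_k=\angle_0+\A\sum_{j=0}^{k-1}\frac{1}{\action_0-j}=\angle_0+\A\ln\frac{\action_0}{\action_0-k}+O(\action_0^{-1}).
\]
Inverting the logarithm would show that the exit index $k_1$ is determined by $\action_{k_1}=\lfloor\action_0/\mu\rfloor$ plus a $\{0,1\}$ rounding correction that depends on whether $\{\mu^{-1}\action_0\}$ exceeds a $\tilde\angle_0$-dependent threshold; this dichotomy is precisely the one encoded by the indicator $\chi_{\geqslant 0}(\mu+2\{\mu x\}-3)$ sitting inside $\H_\mu$, once one makes the substitution $x=\action-1-\chi_\mu^{(1)}(\tilde\angle_0)$ that aligns the two legs. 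The overshoot $\delta_1=\angle_{k_1}-1\in[0,\A/\action_{k_1})$ would then equal $(\A/\action_{k_1})\{\mu\action_{k_1}\}$ up to $O(\action_0^{-2})$, producing the term $-\{\mu x\}$ in $\H_\mu$.

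The ascending run is handled by the same harmonic-sum asymptotic with $\action_k=\action_{k_1}+(k-k_1)$, yielding an exit time $k_2$ at which $\action_{k_2}=\action_0+\varepsilon$, where $\varepsilon\in\{-1,0,1\}$ is the sum of the rounding corrections from the two legs. Since $\varepsilon$ is determined by $\tilde\angle_1$, its positive and negative values cut out two subintervals $\I_+,\I_-\subset[0,1]$, giving the action equation $\action_1=\action+\chi_+(\tilde\angle_1)-\chi_-(\tilde\angle_1)$. Rescaling the wrapped angle $\angle_{k_2}-2\in[0,\A/\action_{k_2})$ back to $\tilde\angle$-coordinates and reducing modulo $1$ produces the angular equation: the main term $(2/\mu)\H_\mu$ combines the two logarithmic transfers, the constant $(1+\chi_++\chi_-)/2$ absorbs the change of denominator of the fundamental width between $\action$ and $\action_1$, and the subtractions $-(\chi_\mu^{(0)}+\chi_\mu^{(1)})$ record two thin boundary layers of width $1/\mu$ in which the discrete exit step falls on a neighboring integer.

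The main obstacle is the bookkeeping of the second step: one must reconcile the fractional parts $\{\mu^{-1}\action_0\}$ and $\{\mu\action_{k_1}\}$ across the two legs and match the resulting integer jumps to the threshold $(3-\mu)/2$ appearing in $\H_\mu$. I would pin down the constants by first analyzing the two limiting cases $\tilde\angle_0\downarrow 0$ and $\tilde\angle_0\uparrow 1$, which fixes the boundary indicators $\chi_\mu^{(0)}$ and $\chi_\mu^{(1)}$, and then extend to general $\tilde\angle_0$ by monotonicity of the exit conditions. Once this combinatorial picture is secured, the uniform $O(\action^{-1})$ error bound is routine: the Euler--Maclaurin remainder in the harmonic sum is $O(\action_0^{-1})$ and the fundamental width $\A/\action$ is the natural rescaling unit. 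The hypothesis $\mu<3$ enters only through the threshold $(3-\mu)/2$ in $\H_\mu$; for $\mu\geqslant 3$ the two boundary layers of width $1/\mu$ fail to fit disjointly into $[0,1]$ and the return map would acquire a genuinely different combinatorial structure.
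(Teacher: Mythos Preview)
Your overall strategy—follow one revolution, expand the harmonic increments by Euler--Maclaurin, read off the rescaled angle—is the paper's. The gap is in the direction of the first leg and the fractional part it produces. You send the action \emph{down} to $\action_{k_1}\approx\action_0/\mu$, so the controlling quantity is $\{\action_0/\mu\}$ (as you yourself write); a line later you switch to $\{\mu\action_{k_1}\}$, which for $\action_{k_1}\approx\action_0/\mu$ and $\action_0\in\mathbb{Z}$ is near $\{\action_0\}=0$. Neither of these is $\{\mu(\action-1)\}$, the object actually sitting inside $\H_\mu$, and the substitution $x=\action-1-\chi_\mu^{(1)}$ does not convert one into the other. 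The paper's calculation runs the first leg \emph{upward}, $S_1(\angle,\action)=\sum_{k=0}^{n}\frac{1}{\action+k}$, so the turning level is $\action+n\approx\mu(\action-1)$ and $\{\mu(\action-1)\}$ appears directly in the overshoot; this is what generates $\H_\mu$. (There is a sign discrepancy between the displayed formula for $\action_1$ and the convention used throughout the proof; the theorem as stated matches the ascending-first picture, and your argument must follow that convention to recover $\H_\mu$.)

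A second gap: you take $\varepsilon\in\{-1,0,1\}$ for granted. The paper proves this separately (Lemma~\ref{lm: First_return}) by comparing the two crossing inequalities against $\DS=S_2-S_1$ and showing that $|n-n'|\geqslant 2$ is incompatible with $(\angle,\action)\in\fund$. That a~priori bound, together with a rigidity argument (Proposition~\ref{lm:rigid}) that the number $n$ of up-steps is constant across $\I_+$, is what makes $\I_+,\I_-$ \emph{single} intervals of equal length confined to the $1/\mu$-boundary layers; in your sketch ``its positive and negative values cut out two subintervals'' is asserted, not derived. Finally, the paper's rescaling is $\tilde\angle=(\action-1)\angle/\A$, not $\action\angle/\A$, which is why the argument of $\H_\mu$ is $\action-1$ rather than $\action$.
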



If $\mu$ is rational, then the map might possesses  a uniformly growing trajectory.  Indeed, if one of the periodic points stays longer in the positive part of the base interval than in  the negative part, then the corresponding trajectory grows without bound. 
Our construction of an escaping trajectory of the system \eqref{eq:pinball} consists in  choosing the initial data in an
appropriate way  so as to kill the leading order perturbation of the map. Next, we would have to estimate that the remaining perturbation
will not destroy such ``resonant'' growth. Combining these ideas, we prove

\begin{theorem}
\label{th: escape} For $\A=\dfrac{1}{\ln 2m}$, $m\in \mathbb{N}$ there exists 
an unbounded trajectory in the system \eqref{eq:pinball}.
\end{theorem}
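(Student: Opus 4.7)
The plan is to apply Theorem~\ref{th: Main} with $\mu=\exp(\A^{-1})=2m$, exhibit a resonant periodic orbit of the leading-order map on which the action grows linearly, and then bound the $O(\action^{-1})$ perturbation tightly enough to preserve that growth.

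First I would specialize $g_\mu$ to $\mu=2m$. Since the $\A\Z$-map preserves the unit-step lattice in $\action$, the argument $\action-1-\chi_\mu^{(1)}(\tilde\angle)$ of $\H_\mu$ is always integral on an orbit, so $\{2m(\action-1-\chi_\mu^{(1)})\}=0$ identically. This collapses $\H_\mu$ to a constant, turning the leading-order first return map into an $\action$-\emph{independent} piecewise constant circle shift whose increments $(2/\mu)g_\mu(\tilde\angle)$ all lie in $(1/m)\mathbb{Z}$. The rationality of these increments is precisely the feature that the arithmetic choice $\A=1/\ln(2m)$ buys, and it is what will enable the construction of periodic orbits.

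Next I would locate a periodic orbit with strictly positive action drift. Because every angular shift is commensurate with $1/m$, any orbit of the leading-order map visits at most $2m$ distinct angles before closing up. I would select $\tilde\angle_0$, lying away from $\partial\I_+\cup\partial\I_-$ and from the boundary strips where $\chi_\mu^{(0)}$ and $\chi_\mu^{(1)}$ are supported, so that the resulting orbit $\{\tilde\angle_k\}_{k=0}^{p-1}$ meets $\I_+$ strictly more often than $\I_-$. Along such an orbit the action increments sum to a positive integer per period, so $\action_n$ grows linearly under the unperturbed dynamics. This realises the heuristic scenario described just before the theorem statement.

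Finally I would control the perturbation. The correction is $O(\action^{-1})$ in the angular update only; the $\action$-update depends on $\tilde\angle_1$ solely through the step functions $\chi_\pm$, so the resonance survives as long as no iterate $\tilde\angle_n$ is displaced across $\partial\I_+\cup\partial\I_-$. Under the bootstrap $\action_n\geq cn$ the accumulated angular drift up to step $N$ is $\sum_{n\leq N}O(1/\action_n)=O(\log N)$, which is borderline. To close the argument I propose a Cantor-type selection of the initial angle: at scale $N$, excise the $\tilde\angle_0$ whose first $N$ iterates (under the true map) lie within distance $C\log N/N$ of $\partial\I_\pm$. Because the leading-order map is an isometry on each continuity cell, the excluded set at scale $N$ has measure $O(p\log N/N)$, and a Borel--Cantelli-type summation along a geometric subsequence $N_k=2^k$ leaves a nonempty residual set; any initial angle from it yields a trajectory with $\action_n\to\infty$.

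The hard part will be this last step: the logarithmic cumulative drift is right at the edge of what the Cantor construction can absorb, so either the measure estimate must be executed carefully along a well-chosen subsequence of scales, or one must uncover a symmetry of the pinball map making the per-period mean perturbation vanish along the resonant orbit (in which case the drift is bounded and no exclusion argument is needed). The first two steps are essentially algebraic simplifications of the explicit formula in Theorem~\ref{th: Main} specific to $\mu\in 2\mathbb{N}$.
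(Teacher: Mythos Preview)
Your first step is right, and in fact stronger than you notice: for $\mu=2m$ the leading shift on $\I_+$ is not merely rational but \emph{zero}, since $\dfrac{2(1+[(\mu-1)/2])}{\mu}-1=\dfrac{2m}{2m}-1=0$. So the leading-order first return map is the identity on $\I_+$, and every point there is a fixed point of the unperturbed dynamics with $\action\to\action+1$. No periodic-orbit search is needed.

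The genuine gap is in your third step. A Cantor/Borel--Cantelli exclusion cannot absorb an $O(\log N)$ cumulative angular drift: that drift is unbounded, so for \emph{every} initial angle the true orbit eventually deviates from the leading-order one by an amount comparable to the width of $\I_+$, and the exclusion sets you describe eventually cover the whole fundamental interval. Your own alternative (``uncover a symmetry making the per-period mean perturbation vanish'') is in fact the route the paper takes, but it is not a symmetry argument: one must compute the $O(\action^{-1})$ correction explicitly from the expansion underlying Theorem~\ref{th: Main} (using that $y=\{\mu(N-1)\}/\mu(N-1)\equiv 0$ kills the $O(y)$ terms, leaving the next-order coefficient explicit). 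One then solves for the specific angle $\tilde\varphi_0$ at which this correction exactly cancels the rescaling shift $-\tilde\varphi_0/\action$ induced by $\action\mapsto\action+1$; for $\mu=2$ this gives $\tilde\varphi_0=\tfrac18$. The residual after this cancellation is of higher order (at worst $O(\action^{-2})$), which \emph{is} summable, so the rescaled angle remains near $\tfrac18\in\I_+$ for all time and $\action\to\infty$. In short, the $O(\action^{-1})$ term is not a perturbation to be controlled probabilistically but a structure to be exploited by the choice of $\tilde\varphi_0$.
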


\begin{remark}
If $\mu$ is irrational, then the leading order part of the map is reminiscent of EK map. Indeed, ignoring 
characteristic functions, the major part of the map takes the form
\[
\tilde \angle_1= \tilde \angle + \frac{2}{\mu} - \{\mu (I-1) \} \,\, {\rm or} \,\,  \tilde \angle_1= \tilde \angle + \frac{2}{\mu} - \{\mu (I-2) \}.
\]
It seems likely that the angular variable will be uniformly distributed and 
 one should expect similar behavior as found by Kesten. This will be the 
subject of future investigation.
\end{remark}

\begin{remark}
Note that smoothing the signum function  discontinuity in \eqref{eq:pinball} will make KAM theory applicable and then all 
solutions will be bounded. 
\end{remark}

Indeed,  change the variables: $(\angle,y)=(\angle, \action^{-1})$. 
In the new variables, the smooth version of the Pinball transformation takes the form
\[\begin{cases}
\angle_1 = \angle +\A y\mod 2 \\
y_1=\dfrac{1}{\frac{1}{y} - f(\angle,y)}
\end{cases}\]
where $f$ is smooth and $|f(\angle,y)|<1$. Then 
\[
y_1=y\dfrac{1}{1 - y f(\angle,y)}=y\left(1 + \sum_{n=1}^{\infty} (yf(\angle,y))^n\right)
\]
so the perturbation is of order $O(y^2)$ which is much smaller than the twist. The curve intersection property 
follows from the area-conservation in the original variables. Therefore, this map satisfies the conditions of monotone 
twist theorem, see {\em e.g.} \cite{MF}.

\section{Proof of Theorem \ref{th: Main}.}

Recall the definition of the fundamental domain as a subset $\fund\subset \mathbb{S}\times \mathbb{R}_+$ between the discontinuity line $\phi=0$ and its first iteration: 
\begin{equation}
\label{eq:fundamental_domain}
\fund=\left\{(\angle,\action)\mid \angle\in \left(0,\dfrac{\A}{ \action-1}\right)\right\}
\end{equation} 
and consider the transformation $\trans(\angle,\action)=(\angle', \action')$
as the first return map  for any point $(\angle,\action)\in \fund$ according to \eqref{eq:pinball}.
We have the following bound on the action change

\begin{lemma}
\label{lm: First_return}
If $(\angle', \action')$ is an image of the point $(\angle, \action)$ under the transformation $\trans$ then $|\action'-\action|\leqslant 1$.
\end{lemma}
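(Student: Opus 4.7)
The plan is to track one return cycle explicitly from an arbitrary $(\angle,\action)\in\fund$. Since the angular increment $\A/\action$ is strictly positive, the iterates $\angle_k$ move monotonically modulo $2$ along the circle, and a return to $\fund$ corresponds to completing one full revolution. During this revolution the orbit passes once through the lower half $(0,1)$, where $\sgn(\angle_k-1)=-1$ so the action decreases by $1$ at each step, and once through the upper half $(1,2)$, where the action increases by $1$ at each step. Letting $n_\pm$ denote the numbers of iterates $\angle_k$ landing in the respective half-intervals, one has $\action'-\action = n_+ - n_-$, and the lemma reduces to showing $|n_+ - n_-|\leqslant 1$.

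To analyze $n_\pm$ I would convert the ``each arc sweeps angle $1$'' requirement into a quantitative sandwich on harmonic-type sums. In the descending arc the actions visited are $\action,\action-1,\ldots,\action-n_-+1$, so the total angular displacement equals $\A\sum_{j=0}^{n_--1} 1/(\action-j)$. Because $\angle_{n_-}<1\leqslant \angle_{n_-+1}$, this sum lies within a single step $\A/(\action-n_-)$ of $1-\angle$, and an analogous sandwich holds for the ascending arc, where the actions are $\action-n_-,\ldots,\action-n_-+n_+-1$. Subtracting the two and telescoping, the difference between the two harmonic blocks of lengths $n_-$ and $n_+$ anchored at the common turning value $\action-n_-$ is strictly less than one step $\A/(\action-n_-)$. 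Since perturbing $n_\pm$ by one unit changes the corresponding block sum by approximately the same step size, any configuration with $|n_+ - n_-|\geqslant 2$ would violate one of the sandwich bounds, yielding $|n_+ - n_-|\leqslant 1$.

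The main obstacle is pinning down the sharp integer constant. The continuous averaging $\dot\angle = \A/\action$, $\dot\action = \pm 1$ shows that after one full revolution the action exactly returns to its starting value, but the discrete map introduces $O(\A/\action)$ corrections at each crossing of $\angle=1$ and $\angle=0$, and I must rule out the possibility that these two corrections conspire to push $|n_+-n_-|$ up to $2$. The precise geometry of $\fund$ — the thin strip between the discontinuity line $\angle=0$ and its first iterate under ${\bf P}$ — is what forces both endpoints of the return orbit to lie within a single angular step of the singular line, and this is the feature I expect will close the gap.
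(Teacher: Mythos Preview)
Your plan matches the paper's proof almost exactly: both decompose one return cycle into two harmonic-type blocks $S_1,S_2$ (one for each half of the circle), write the sandwich inequalities that pin down their lengths $n,n'$, and derive a contradiction for $|n-n'|\geqslant 2$. The paper (with the opposite orientation convention, action increasing first) makes the step you flag as the ``main obstacle'' fully explicit: setting $\Delta S=S_2-S_1$, the two sandwich conditions combine to give
\[
\frac{\angle}{\A}-\frac{1}{\action+n+1}-\frac{1}{\action+n-n'-1}\;<\;\Delta S\;<\;\frac{\angle}{\A}+\frac{1}{\action+n+1},
\]
while straight telescoping of the harmonic blocks yields $\Delta S>\tfrac{1}{\action-1}+\tfrac{1}{\action-2}$ when $n-n'<-1$ and $\Delta S<-\tfrac{1}{\action}-\tfrac{1}{\action+1}$ when $n-n'>1$. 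The fundamental-domain constraint $\angle/\A\in(0,1/(\action-1))$ then renders these incompatible---precisely the mechanism you anticipated in your final paragraph.

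The only place your sketch is imprecise is the sentence ``the difference between the two harmonic blocks \dots\ is strictly less than one step.'' What the sandwich actually gives is that $\Delta S$ lies within a window of width roughly two step-sizes centred at $\angle/\A$, not at $0$; it is the thinness of $\fund$ (forcing $\angle/\A$ itself to be at most one step) together with the \emph{two-term} lower bound on $|\Delta S|$ for $|n-n'|\geqslant 2$ that closes the gap. Once you write those two inequalities down, the argument is complete.
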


In other words, Lemma \ref{lm: First_return} states that as the angle variable winds around the cylinder and 
the action variable undergoes large changes, after returning to the fundamental domain,  the action will not change by more than 1.
This property assures a good local control on the orbits.

\begin{figure}[ht]
\begin{center}
\includegraphics[width = 0.25\textwidth]{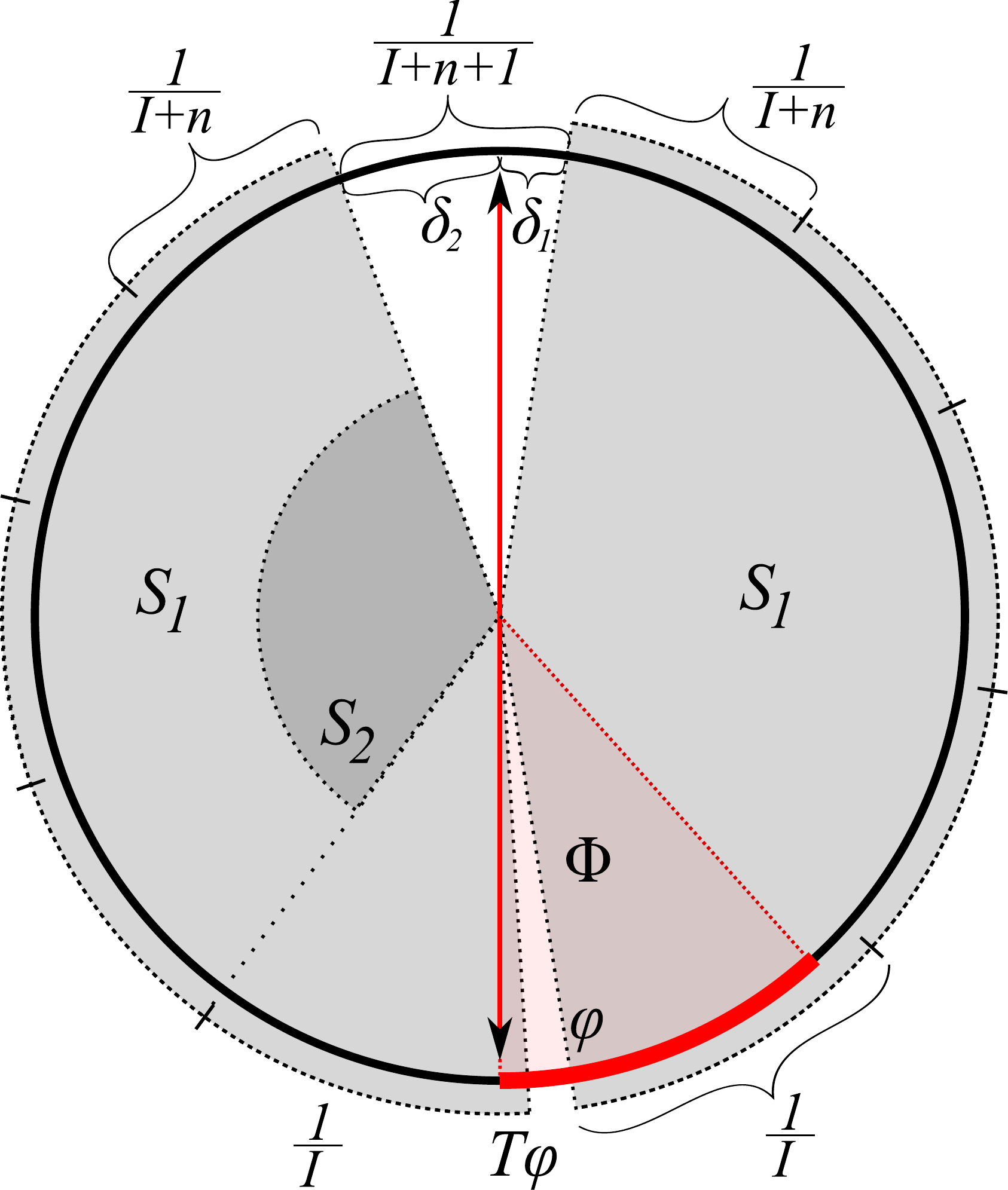}
\hspace{0.1\textwidth}
\includegraphics[width = 0.55\textwidth]{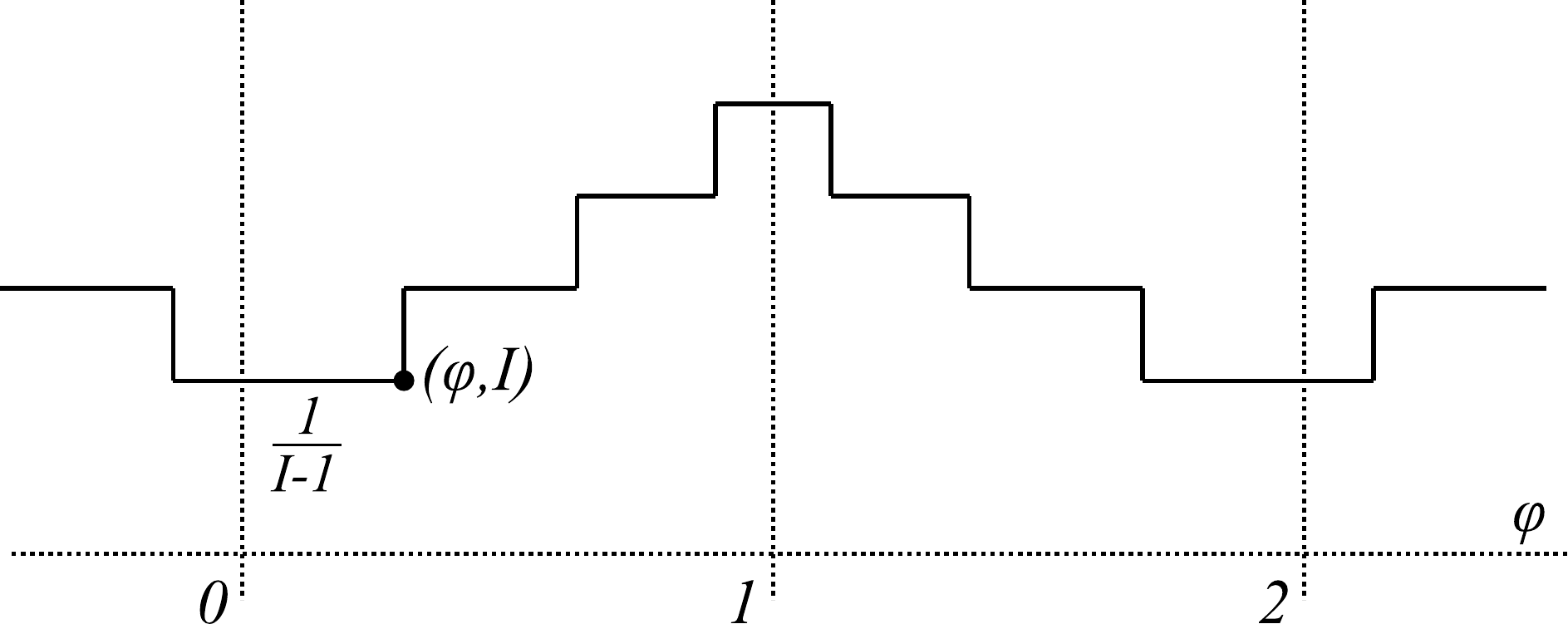}
\end{center}
\caption{Mapping from fundamental domain to itself represented on the base circle 
(left panel) and on the covering space (right panel).}
\label{fig:circle}
\end{figure}

Next, we describe  the structure of the subsets  in the  base $\angle\in (0,\A/(I-1))$ for which the action increases $\action'=\action+1$ or decreases 
$\action'=\action-1$.

Let us introduce the $\emph{rescaled}$ angle variable $\tilde{\angle}=(\action-1)\angle \A^{-1}$. In the renormalized variables, the fundamental domain can be represented by 
 \[
\fund=(\tilde{\angle},\action)=[0,1]\times \mathbb{R}_+.
 \]   
 
\begin{lemma} \label{lm:positive-negative} 
The set $\fund \cap \{\action=C\}$ is the union of three disjoint subsets 
\[
\fund\cap \{\action=C\}=\I_+\cup \I_- \cup \I_0, 
\]
where $\I_+$ and $\I_-$ are intervals of equal measure and consist of all points $(\angle,C)$ such that $T(\angle, C)=(\cdot,C\pm 1)$ respectively. 

The intervals are contained in the regions
\[
\I_+ \in \left ( 0,\frac{\A}{\mu (I-1)} \right )   \,\,\,  {\rm and } \,\,\, 
\I_- \in \left (  \frac{\A(1-\mu^{-1})}{I-1}, \frac{\A}{I-1} \right ),
\]
where $\Kes(\A) = \exp(\A^{-1})$.

 \end{lemma}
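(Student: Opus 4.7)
The plan is to follow the orbit of $(\angle,C)\in\fund$ under ${\bf P}$ step by step until it first returns to $\fund$, and read off the net action change. Starting from $\angle\in(0,\A/(C-1))$, the first iterate is $\angle_1=\angle+\A/C<1$, so $\action_1=C-1$, and the orbit begins a \emph{descent phase} of some length $k_1$ during which $\angle<1$ and the action drops by $1$ per step. Once $\angle$ exceeds $1$ there follows an \emph{ascent phase} of length $k_2$ with $\angle\in(1,2)$ and action rising by $1$ per step, and finally a single \emph{wrap step} when $\angle$ crosses $2$, is reduced mod $2$ back into $\fund$, and the action drops once more. Counting, $\DS=k_2-k_1-1$, and Lemma~\ref{lm: First_return} confines $\DS\in\{-1,0,+1\}$, giving the decomposition $\fund\cap\{\action=C\}=\I_+\cup\I_0\cup\I_-$.

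To locate the three pieces, I would write the iterates explicitly: during the descent $\angle_k=\angle+\A\sum_{j=C-k+1}^{C}j^{-1}$ with $\action_k=C-k$, so $k_1$ is the least integer with $\angle_{k_1}>1$. Using Euler's asymptotic $\sum_{j=m+1}^{n}j^{-1}=\log(n/m)+O(1/m)$ this reads $C-k_1\approx Ce^{-(1-\angle)/\A}$. A parallel computation on the ascent phase, starting from $\angle_{k_1}$ with action $\action_{k_1}=C-k_1+2$, gives $\action_{k_1+k_2}\approx (C-k_1)\Kes$. Combining these yields the continuous approximation
\[
\DS \approx C\bigl(e^{\angle/\A}-1\bigr)-1,
\]
whose integer value is the actual $\DS$.

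The functions $k_1(\angle)$ and $k_2(\angle)$ are piecewise-constant and weakly decreasing in $\angle$; each drops by $1$ at thresholds of the form $\angle=1-\A\sum_{j=C-K+1}^{C}j^{-1}$ (respectively, the analogous condition for the ascent). Consecutive thresholds of the same type are separated by $\A/(C-K+1)\approx\A\Kes/C$ in the relevant range $C-K\approx C/\Kes$, which exceeds the width $\A/(C-1)$ of $\fund\cap\{\action=C\}$ whenever $\Kes>1$, so at most one $k_1$-jump and one $k_2$-jump sit inside $\fund$. These two jumps partition $\fund\cap\{\action=C\}$ into at most four cells, and by Lemma~\ref{lm: First_return} only three distinct values of $\DS$ can arise, giving the intervals $\I_+,\I_0,\I_-$. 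Inverting the explicit threshold formulas places the $k_1$-jump (which raises $\DS$ by one) at distance $\A/(\Kes(C-1))$ from the left endpoint $\angle=0$ and the $k_2$-jump (which lowers $\DS$ by one) at the same distance from the right endpoint $\angle=\A/(C-1)$, yielding the stated inclusions
\[
\I_+\subset\left(0,\frac{\A}{\Kes(C-1)}\right),\qquad \I_-\subset\left(\frac{\A(1-\Kes^{-1})}{C-1},\frac{\A}{C-1}\right).
\]
The symmetric placement of the two thresholds at the common distance $\A/(\Kes(C-1))$ from the respective boundaries then forces $|\I_+|=|\I_-|$.

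The principal technical obstacle is controlling the $O(1/C)$ error in Euler's expansion with enough precision -- roughly $O(1/C^2)$ -- to pin down both threshold positions as described, and to verify that under the hypothesis $0<\Kes<3$ of Theorem~\ref{th: Main} precisely one $k_1$-jump and one $k_2$-jump fall inside $\fund\cap\{\action=C\}$, producing three non-degenerate subintervals $\I_+,\I_0,\I_-$ rather than fewer or more.
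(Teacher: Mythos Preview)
Your approach diverges from the paper's in a substantial way. The paper uses no asymptotic expansions in this lemma at all; instead it introduces the exact ``overshoot'' quantities $\delta_1,\delta_2$ at the crossing of $\angle=1$ (with $\delta_1+\delta_2=\A/(\action+n+1)$), reduces membership in $\I_+,\I_0,\I_-$ to the sign of $\delta_2-\delta_1-\angle$ relative to $0$ and $-\A/(\action-1)$, and then proves a rigidity proposition: a point lies in $\I_+$ if and only if the number $n$ of steps in the gaining half-circle is the same as for the reference point $\angle_0=0$. From this it reads off $\I_+$ explicitly as either $(0,\delta_1^{(0)})$ or $(\delta_1^{(0)}-\delta_2^{(0)},\delta_1^{(0)})$, and obtains $\I_-$ by the time-reversal symmetry of ${\bf P}$. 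All harmonic-sum asymptotics are deferred to Lemma~\ref{lm: renormalization_Kesten}.

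Your threshold-counting argument has a genuine gap. You treat the $k_1$-jump and the $k_2$-jump as independent cut points, but $k_2$ depends on $\angle$ through the starting data of the ascent phase, which in turn depends on $k_1$; a jump in $k_1$ can therefore induce or cancel a jump in $k_2$. This coupling is precisely what the paper's rigidity proposition disentangles, and without an analogue of it you cannot conclude that $\I_+$ (or $\I_-$) is a single interval. Indeed, the paper later records that $\I_0$ may consist of three components, so the base interval can split into as many as five pieces---not the ``at most four cells'' you get from two cuts (and two cuts give three cells in any case). Your equal-measure claim is also only approximate as stated, whereas $|\I_+|=|\I_-|$ is exact and follows from the explicit $\delta$-endpoints together with time reversal (or from area preservation of $T$), neither of which you invoke. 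Finally, your orbit description (descent phase first) runs opposite to the paper's sums $S_1=\sum_{k=0}^{n}1/(\action+k)$; the two pictures are related by time reversal, but you should be aware of the convention when you try to match formulas.
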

 
 Finally, we have the following lemma which ends the proof of Theorem \ref{th: Main}.
 
 \begin{lemma}
 \label{lm: renormalization_Kesten}
 The first return map $\trans$ in the rescaled variables takes the form
 \begin{equation*}
 T(\angle,\action)=\begin{cases}
 \tilde{\angle}_1=\tilde{\angle}+\dfrac{2}{\mu} \,\, g_\mu(\action,\tilde{\angle}) + O(\action^{-1})\\
 \action_1=\action+\chi_+(\tilde{\angle}_1)-\chi_-(\tilde{\angle}_1)
 \end{cases}
 \end{equation*} where    $\Kes(\A) = \exp(\A^{-1})$,  $\chi_{\pm}$ are characteristic functions of positive and negative intervals depending 
weakly on $I$.

 \end{lemma}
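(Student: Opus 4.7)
The plan is to follow an orbit $(\angle_0,\action_0)\in \fund$ through one full revolution of the cylinder, computing exactly how many iterates are spent in each half of the base circle and what the residual angle and action are upon return. Because the orbit enters $\fund$ along $\angle=0^+$, the initial angle $\angle_0=\A\tilde\angle_0/(\action_0-1)$ is of order $\action_0^{-1}$, and the trajectory naturally splits into a descent phase $\angle_j\in(0,1)$ (in which $\action_{j+1}=\action_j-1$) and an ascent phase $\angle_j\in(1,2)$ (in which $\action_{j+1}=\action_j+1$), each of length $K_1$, $K_2$ comparable to $\action_0(1-\mu^{-1})$.

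My first step is to evaluate the cumulative angular increment across the descent phase,
\[
S_{K_1}=\A\sum_{j=0}^{K_1-1}\frac{1}{\action_0-j}=\A\ln\frac{\action_0}{\action_0-K_1}+O(\action_0^{-1}),
\]
using the expansion $H_n=\ln n+\gamma+(2n)^{-1}+O(n^{-2})$. The exit time $K_1$ is defined as the first index with $\angle_0+S_{K_1}\geqslant 1$, which, combined with $\mu=e^{1/\A}$, yields $\action_{K_1-1}=\action_0-K_1+1=\action_0/\mu+O(1)$. The precise integer value is pinned down by the fractional part $\{\mu(\action_0-1)\}$: if $\lfloor\mu(\action_0-1)\rfloor$ descent steps leave $\angle$ still below $1$, an extra step is needed, and this binary choice is exactly what the indicator $\chi_{\geqslant 0}(\mu+2\{\mu x\}-3)$ in $\H_\mu$ records. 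An analogous harmonic-sum computation for the ascent phase gives $K_2$ and confirms $\action_{K_1+K_2}=\action_0+O(1)$, consistent with Lemma \ref{lm: First_return}.

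Next, I would convert to the rescaled variable $\tilde\angle=(\action-1)\angle/\A$ and track the residual angle after both phases. The leading-order angular shift $2-S_{K_1}-S_{K_2}$ takes, after Taylor expansion in $\action_0^{-1}$, the explicit form $\tfrac{2}{\mu}g_\mu(\action_0,\tilde\angle_0)+O(\action_0^{-1})$, with the $\tfrac12(1+\chi_++\chi_-)$ and $-(\chi_\mu^{(0)}+\chi_\mu^{(1)})$ pieces coming respectively from the action contribution described in Lemma \ref{lm:positive-negative} and from the discrete one-step shifts of the crossing index when $\tilde\angle_0$ lies within $\mu^{-1}$ of either endpoint of $\fund$. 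The action part of the map is then immediate from Lemma \ref{lm:positive-negative}: the sign of $\action_1-\action$ is $+1$ on $\I_+$, $-1$ on $\I_-$, and $0$ elsewhere, so that $\action_1=\action+\chi_+(\tilde\angle_1)-\chi_-(\tilde\angle_1)$.

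The principal technical obstacle is that the residual angle in $\fund$ is itself of size $O(\action_0^{-1})$, which is the same order as the naive Euler–Maclaurin remainder. A uniform bound therefore requires comparing the discrete orbit with its continuous harmonic-sum asymptotic at second order, keeping the $(2\action_0)^{-1}$ term, and absorbing the resulting $O(1)$ index shifts near the boundary $\tilde\angle\in\{0,1\}$ into the explicit characteristic-function structure of $g_\mu$ rather than the error term. Once this bookkeeping is done, all remaining contributions are genuinely $O(\action_0^{-1})$ uniformly in $\tilde\angle_0\in[0,1]$, which establishes the lemma and thereby completes the proof of Theorem \ref{th: Main}.
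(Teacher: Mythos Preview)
Your strategy is the paper's: follow one revolution, approximate the angular increment by a harmonic sum, expand via $H_k=\ln k+\gamma+(2k)^{-1}+O(k^{-2})$, and fix the integer number of steps by a fractional-part condition; the paper then splits into the cases $\angle\in\I_+,\I_-,\I_0$ (with $\I_0$ itself decomposed into up to three subintervals), which is exactly your endpoint bookkeeping with $\chi_\mu^{(0)},\chi_\mu^{(1)}$.

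There is, however, an internal inconsistency in your setup. You take the first phase to be a \emph{descent}, $\action_{j+1}=\action_j-1$, so that $S_{K_1}=\A\sum_{j}(\action_0-j)^{-1}$ and the turning level is $\action_0-K_1\approx\action_0/\mu$. Under that convention the integer cutoff is controlled by the fractional part of $\action_0/\mu$, not of $\mu(\action_0-1)$; your sentence ``the precise integer value is pinned down by the fractional part $\{\mu(\action_0-1)\}$'' therefore does not follow from your own sum, and carrying your computation through would not produce the function $\H_\mu(x)=\chi_{\geqslant 0}(\mu+2\{\mu x\}-3)-\{\mu x\}$ that appears in $g_\mu$. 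The paper's proof (see \eqref{eq:S_12} and Proposition~\ref{lm:n}) takes the first phase to be an \emph{ascent}, $S_1=\sum_{k=0}^n(\action+k)^{-1}$ with turning level $\action+n\approx\mu(\action-1)$, and it is this direction that yields $n=\mu(N-1)-N+\H_\mu(N-1)$ and hence the $\{\mu x\}$ in the statement. Once you reverse the direction to match, the rest of your sketch---keeping the $(2\action)^{-1}$ Euler--Maclaurin term, absorbing the $O(1)$ index shifts near $\tilde\angle\in\{0,1\}$ into $\chi_\mu^{(0)},\chi_\mu^{(1)}$, and bounding the remainder uniformly by $O(\action^{-1})$---lines up with the paper's derivation.
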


 \vspace{5mm}

\section{Proofs}

\subsection*{Proof of Lemma \ref{lm: First_return}}
\begin{proof}
We begin with giving a heuristic argument based on the Figure \ref{fig:circle} 
(right panel). 
We can represent orbits as stairs going up and down in the domain $\angle \in [0,2]$. Consider the special configuration that is symmetric with respect to $\angle = 1$. It is easy to see that the corresponding solution
will have the same action after the first return map. By moving the graph so that the top level does 
not cross $\angle=1$  it is possible to see that the change in the action cannot be more than one.
Indeed, the lowest level steps are wider than the top one and therefore at most one crossing can occur.

Now, we provide the full proof.
Assume  the initial point $(\angle,\action)$ is in the fundamental domain $\fund$, {\em i.e.} $\angle\in (0,\A/(\action-1))$.  The map \eqref{eq:pinball} 
is iterated $n+1$ times while $\action$ increases until the orbit is one step away from crossing $\angle = 1$. Next, the map is iterated  $n'+1$ times while $\action$ decreases until the orbit is one step away 
from crossing $\angle =2$.  Then, by the definition of the fundamental domain, last $(n+n'+2) $-nd iterate is in  $\fund$ and we obtain

\begin{equation}
\label{eq:first_return1}
\trans(\angle,\action)=\left(\angle+\A S_1(\angle,\action) + \A S_2(\angle,\action)+\dfrac{\A}{\action+n+1}+\dfrac{\A}{\action+(n-n'-1)}, \action+n-n'\right),
\end{equation}
where
\begin{equation}
\label{eq:S_12}
S_1(\angle,\action)=\sum\limits_{k=0}^{n} \dfrac {1}{\action+k},\qquad
S_2(\angle,\action)=\sum \limits_{k=0}^{n'}\dfrac {1}{\action+n-k}.
\end{equation}
The numbers $n$ and $n'$ are uniquely defined by the relations:
\begin{equation}
\label{eq:sk1}
\begin{cases}
1>\angle+\A S_1(\angle,\action)\\
1<\angle+\A S_1(\angle,\action)+\dfrac{\A}{\action+n+1}
\end{cases}
\end{equation}
and
\begin{equation}
\label{eq:sk2}
\left\{
\begin{array}{cl}
2>&\angle+\A S_1(\angle,\action) + \A S_2(\angle,\action)+\dfrac{\A}{\action+n+1}\\\\
 2<&\angle+\A S_1(\angle,\action) + \A S_2(\angle,\action)+\dfrac{\A}{\action+n+1}+\dfrac{\A}{\action+n-n'-1}.
\end{array}
\right.
\end{equation}

The equation  \eqref{eq:sk1} means that $n$-th iterate is the last one staying in the  right half-circle, so the next iterate will be in the left half-circle. Similarly $(n+1)+n'$-th iterate is the last one  before returning to $\fund$, see Figure \ref{fig:circle}.

 Denote by $\DS=S_2-S_1$. Rewriting the second sum in \eqref{eq:S_12} for $k'=n-k$ we obtain for $\DS$:
\begin{equation}
\label{eq:S'}
\DS=\sum\limits_{k'=n-n'}^{n}\dfrac{1}{\action+k'}-\sum\limits_{k=0}^{n}\dfrac{1}{\action+k}.
\end{equation}

Note that the expression \eqref{eq:S'} implies that 
\begin{equation}
\label{eq:S_cont}
\begin{cases}
\DS>\dfrac{1}{\action-1}+\dfrac{1}{\action-2} &\mbox{ if } n-n'<-1
\\
 \DS<-\dfrac{1}{\action} -\dfrac{1}{\action+1} &\mbox{ if } n-n'>1.
\end{cases}
\end{equation}
To prove that $|n-n'|\leqslant 1$ it is sufficient to show that \eqref{eq:S_cont} contradicts  \eqref{eq:sk1} --
\eqref{eq:sk2}. Rewriting \eqref{eq:sk2} for  $S_2=\DS+S_1 $ we obtain 
\begin{equation}
\label{eq:ds1}
2>\angle+2\A S_1(\angle,\action)+\A \DS+\frac{\A}{\action+n+1}>2-\frac{\A}{\action+n-n'-1}.
\end{equation} 
Multiplying the equation \eqref{eq:sk1} by $2$ and
substituting  into \eqref{eq:ds1} we get 

 \begin{equation}
 \label{eq:s*}
 \left\{
 \begin{array}{cl}
 \DS&<\dfrac{\angle}{\A}+\dfrac{1}{\action+n+1}
 \\\\
 \DS&>\dfrac{\angle}{\A}-\dfrac{1}{\action+n+1}-\dfrac{1}{\action+n-n'-1}.
 \end{array}
 \right.
 \end{equation}

Since for any point $(\angle, \action)\in \fund$ one has $\dfrac{\angle}{\A}\in\left(0,\dfrac{1}{ \action-1}\right)$ inequalities \eqref{eq:s*}  contradict  \eqref{eq:S_cont}.

\end{proof} 

\subsection*{Proof of lemma \ref{lm:positive-negative}}
\begin{proof}

 As already found in the proof of Lemma \ref{lm: First_return}, transformation $\trans: \fund \rightarrow \fund$ in $\angle$-variable takes the form 
 \begin{equation}
 \label{eq:First_return}
\angle'=\angle+\left(2\A S_1(\angle,\action)+ \dfrac{\A}{\action+n+1}-2\right)+\A\Delta S,
\end{equation}
 where 
 \begin{equation}
 \label{eq:DeltaS}
 \Delta S=
 \begin{cases}
 0,  &    {\rm if}  \,\, \angle \in \I_+\\
 \dfrac{1}{\action-1},&  {\rm if}  \,\, \angle \in \I_0\\
 \dfrac{1}{\action-1}+\dfrac{1}{\action-2},&  {\rm if} \,\,  \angle \in \I_-\\
 \end{cases}
 \end{equation} 
Define $\delta_1, \delta_2$ by the relations (see Fig. \ref{fig:circle})
   \[\angle+\A S_1(\angle,\action)=1-\delta_1\]
   \[\angle+\A S_1(\angle,\action)+\dfrac{\A}{\action+n+1}=1+\delta_2,\]
then it is easy to see that
   \[\angle+2\A S_1(\angle,\action)+\dfrac{\A}{\action+n+1}-2=\delta_2-\delta_1-\angle.\]

Then the expression \eqref{eq:DeltaS} could be rewritten in the form
   \begin{equation}
   \label{eq:I+-0}
   \begin{cases}
   \delta_2-\delta_1-\angle>0,& \mbox{for}\quad\angle\in\I_+\\
   \delta_2-\delta_1-\angle \in \left(-\dfrac{\A}{\action-1},0\right),&\mbox{for}\quad\angle\in\I_0 \\
    \delta_2-\delta_1-\angle<-\dfrac{\A}{\action-1},&\mbox{for}\quad\angle\in\I_-\\
   \end{cases}
   \end{equation}

 In particular, the first inequality in \eqref{eq:I+-0} implies  that for 
 $\angle\in \I_+$ one has 
\[
\angle<\dfrac{\A}{\action+n+1}, \,\, {\rm since} \,\,\,\,  \dfrac{\A}{\action+n+1}=\delta_2+\delta_1.
\]
    Similarly, it is easy to verify that  if $\angle\in \I_-$ then $\angle >\dfrac{\A}{\action-1}-\dfrac {\A}{\action+n+1}$. 
The next proposition describes some rigidity properties  of the intervals $\I_0,\I_+,\I_- $ in   \eqref{eq:I+-0}

\begin{proposition}
\label{lm:rigid}
Assume that $(\angle_0, \action)\in \I_+$ and that the corresponding sum $S_1(\angle_0,\action)$ has $n$ terms. Then the  point $(\angle_\epsilon=\angle_0+\epsilon, \action)$ belongs to $\I_+$  if and only if the sum $S_1(\angle_\epsilon,\action)$ has also $n$ terms. 
\end{proposition}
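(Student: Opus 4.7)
Plan. The strategy is to exploit the piecewise-constant structure of the integer $n(\angle)$ defined by \eqref{eq:sk1}, and to show that $\I_+$ is contained in a single plateau of constancy of $n$.

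First I would observe that $n(\angle)$ is a non-increasing integer-valued step function of $\angle$: as $\angle$ increases through the threshold $\angle = 1 - \A S_1^n$ (where $S_1^n$ denotes the value of $S_1 = \sum_{k=0}^n 1/(\action+k)$), an orbit iterate that was just below $\angle = 1$ crosses to the left half and $n$ drops by one. Between these thresholds both $n$ and $S_1$ are constant; let $J_n$ denote the resulting open plateau in $\fund \cap \{\action = \mathrm{const}\}$. The forward direction of the proposition then follows from Lemma \ref{lm:positive-negative}: its bound $\I_+ \subset (0, \A/(\Kes(\action-1)))$, together with the harmonic-sum estimate $\A S_1^n \approx \ln((\action+n+1)/\action)$ and the relation $\Kes = \exp(\A^{-1})$, identifies the containing interval with the leftmost plateau $J_{n(\angle_0)}$. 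Hence any $\angle_\epsilon \in \I_+$ satisfies $n(\angle_\epsilon) = n(\angle_0)$, so $S_1(\angle_\epsilon, \action)$ has the same number of terms.

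For the converse, suppose $S_1(\angle_\epsilon)$ has $n$ terms, i.e., $\angle_\epsilon \in J_{n(\angle_0)}$. Within this plateau $S_1$ is frozen, so $\delta_1 = 1 - \A S_1 - \angle$ and $\delta_2 = \A/(\action + n + 1) - \delta_1$ are affine in $\angle$, and the defining inequality $\delta_2 - \delta_1 - \angle > 0$ for $\I_+$ from \eqref{eq:I+-0} is a linear condition in $\angle$ that holds at $\angle_0$. Since Lemma \ref{lm:positive-negative} asserts that $\I_+$ is a single interval, this inequality must persist throughout the plateau, yielding $\angle_\epsilon \in \I_+$.

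The main obstacle is making this converse watertight: a priori the plateau $J_{n(\angle_0)}$ could contain a sub-transition of the companion index $n'$ from \eqref{eq:sk2} that would move $\angle_\epsilon$ out of $\I_+$ into $\I_0$ or $\I_-$. I would rule this out by tracking $n'(\angle)$ as an analogous step function across $J_{n(\angle_0)}$ and verifying, using $\Kes = \exp(\A^{-1})$ and the same harmonic-sum asymptotics, that the $\angle$-values at which $n'$ jumps align with the boundaries of $\I_+$ already given by Lemma \ref{lm:positive-negative}. This locks $\I_+$ to the connected sub-region of $J_{n(\angle_0)}$ on which both $n$ and $n'$ take the values they have at $\angle_0$, completing the equivalence.
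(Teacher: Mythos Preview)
Your proposal has a structural circularity that undermines both directions. You repeatedly invoke Lemma~\ref{lm:positive-negative} --- the containment $\I_+\subset(0,\A/(\Kes(\action-1)))$ for the forward direction and the statement that $\I_+$ is a single interval for the converse --- but in the paper Proposition~\ref{lm:rigid} is proved \emph{inside} the proof of Lemma~\ref{lm:positive-negative} and is then used to locate $\I_+$ and establish exactly those facts. So you are assuming what the proposition is meant to help prove. Your forward argument also rests on the harmonic-sum asymptotic $\A S_1^n\approx\ln((\action+n+1)/\action)$ to match the plateau with the interval from Lemma~\ref{lm:positive-negative}; that is an approximation, not an identity, and gives no rigorous conclusion for finite $\action$.

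The ``main obstacle'' you flag --- possible jumps of $n'$ inside the plateau --- is a red herring. By \eqref{eq:I+-0} the membership condition for $\I_+$ is $\delta_2-\delta_1-\angle>0$, and $\delta_1,\delta_2$ depend only on $n$ (they are determined by where the ascending staircase crosses $\angle=1$), not on $n'$. The paper exploits this directly: on a plateau where $n$ is constant one has $\delta_1^{(\epsilon)}=\delta_1^{(0)}-\epsilon$ and $\delta_2^{(\epsilon)}=\delta_2^{(0)}+\epsilon$, so $\delta_2^{(\epsilon)}-\delta_1^{(\epsilon)}-\angle_\epsilon=(\delta_2^{(0)}-\delta_1^{(0)}-\angle_0)+\epsilon$, which settles the ``same $n$'' case by a one-line computation. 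When $n$ drops by one, an equally short calculation using $\delta_2^{(\epsilon)}=\epsilon-\delta_1^{(0)}$ and $\delta_1^{(\epsilon)}=\A/(\action+n)-\delta_2^{(\epsilon)}$ forces $\delta_2^{(\epsilon)}-\delta_1^{(\epsilon)}-\angle_\epsilon<0$. That is the whole proof: two explicit evaluations of the sign of $\delta_2-\delta_1-\angle$, with no appeal to Lemma~\ref{lm:positive-negative}, no asymptotics, and no tracking of $n'$. Your plan should be replaced by this direct computation.
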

\begin{proof}
Denote by $\delta_1^{(0)}$ and $\delta_2^{(0)}$ the corresponding parts of the interval $\dfrac{\A}{\action+n+1}$ for the point $\angle_0$. By assumption $\angle_0\in \I_+$ and 
thus $\delta_2^{(0)}-\delta_1^{(0)}-\angle_0>0$, then by 
\eqref{eq:I+-0} it is sufficient to verify that 
$\delta_2^{(\epsilon)}-\delta_1^{(\epsilon)}-\angle_\epsilon$ remains positive. 

Then the proof can be obtained from  the following calculation: \\

{\bf (a)} First, consider the case when the number of terms remains the same (equal to $n$).
Then $\delta_2^{(\epsilon)}=\delta_2+\epsilon$, $\delta_1^{(\epsilon)}=\delta_1-\epsilon$ and so 

\vspace{3mm}

\begin{equation}
\label{eq:rigid}\delta_2^{(\epsilon)}-\delta_1^{(\epsilon)}-\angle=(\delta_2^{(0)}+\epsilon) -(\delta_1^{(0)} - \epsilon) -\epsilon=\delta_2^{(0)}-\delta_1^{(0)}+\epsilon>\delta_2^{(0)}-\delta_1^{(0)}
\end{equation}

\vspace{3mm}

{\bf (b)} Assume now that the number of terms in $S_1(\angle_\epsilon,\action)$ is different from 
$n$. Assume it contains $n-1$ terms (all other cases can be treated similarly). Then we have, see Figure \ref{fig:sdvig}.

\begin{figure}[ht]
\centering
\includegraphics[width=0.3\textwidth]{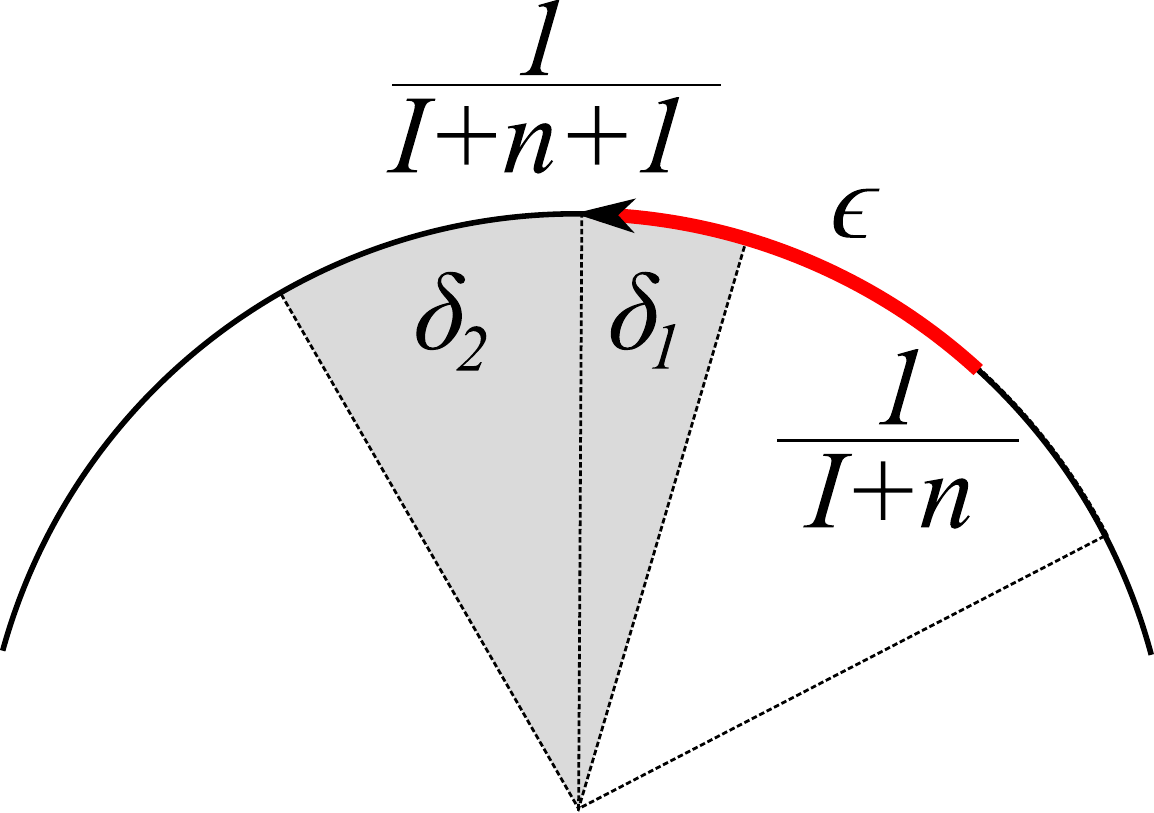}
\caption{Rotation by $\epsilon$.}
\label{fig:sdvig}
\end{figure}

\[
\delta_2^{(\epsilon)}=\epsilon-\delta_1^{(0)}=\epsilon+\delta_2^{(0)}-\dfrac{\A}{\action+n+1}
\]
which implies
\[
\delta_1^{(\epsilon)}=\dfrac{\A}{\action+n}-\delta_2^{(\epsilon)}=\dfrac{\A}{\action+n}-\epsilon+\delta_1^{(0)}.
\] 
Finally, we have 
\[\delta_2^{(\epsilon)}-\delta_1^{(\epsilon)}-\angle_\epsilon=\epsilon+\delta_2^{(0)}-\delta_1^{(0)}-\angle_0-\dfrac{\A}{\action+n+1}-\dfrac{\A}{\action+n}.\]
The latter expression is negative since by construction 
\[
\delta_2^{(0)} + \epsilon < \frac{1}{I+n} + \frac{1}{I+n+1}.
\]
This ends the proof of the proposition.
\end{proof}
Using proposition \ref{lm:rigid}, we will find the set $\I_+$. Let the initial angle be $\angle_0=0$ then 
\[
\delta_2^{(0)}~=~\left\{\A\sum\limits_{j=0}^{n+1}\dfrac{1}{\action+j}\right\} \,\, {\rm and} \,\, 
\delta_1^{(0)}=\dfrac{\A}{\action+n+1}-\delta_2^{(0)}.
\]
If $\delta_2^{(0)}>\delta^{(0)}_1$ then  by \eqref{eq:I+-0} the point $(0,\action)$ belongs to $\I_+$ and then  proposition \ref{lm:rigid} 
implies that $\I_+=\left(0,\delta_1^{(0)}\right)$. 

Otherwise, if $\delta_2^{(0)} < \delta^{(0)}_1$ then  $(0,\action)$ does not belong to $\I_+$ and to find the leftmost point 
$(\phi = \epsilon,\action)\in \I_+$ we need to satisfy the condition $\delta_2^{(\epsilon)}-\delta_1^{(\epsilon)}-\epsilon>0$. 
Using \eqref{eq:rigid} we obtain  $\epsilon=\delta_1^{(0)}-\delta_2^{(0)}$ and therefore in this case 
$\I_+=\left (\delta_1^{(0)}-\delta_2^{(0)}, \delta_1^{(0)}\right )$.

Similar calculations can be carried out  for $\I_-$, but  it is easier 
to use symmetric properties of the map with respect to reversing the ``time''.

Then, if  $\delta_1^{(00)}>\delta^{(00)}_2$ ($\delta_i^{(00)}$ correspond to $\delta^{(0)}$ but are 
obtained from negative iterates) the point $(0-,I-1)$ which is the preimage of  
$(\frac{\A}{I-1}-,I)$ belongs to $\I_-$ and by the same argument as above for $\I_+$, we have
\[
\I_-=\left ( \frac{\A}{\action-1}-\delta_2^{(00)},\frac{\A}{\action-1}\right ).
\]

If the converse holds, {\em i.e.} $\delta_2^{(00)}>\delta^{(00)}_1$ we obtain 
\[
\I_-=\left ( \frac{\A}{\action-1}-(\delta_2^{(00)}-\delta_1^{(00)}), 
\frac{\A}{\action-1}-\delta_2^{(00)} \right ). 
\]
This ends the proof of the lemma.
\end{proof}
The location and the measure of the intervals $\I_+ $ and $\I_- $ are then controlled by the fractional 
parts of the sums $\A S_1(0,\action)$. We shall estimate these quantities in the proof of the next lemma.

 \subsection*{Proof of Lemma \ref{lm: renormalization_Kesten}}
\begin{proof}

For the clarity of presentation, we give the proof only for the case  $\action=N$. The general case can be treated similarly.  
We introduce the notation for the expression
\[
\A \psi(\angle, \action)= 2\A S_1(\angle,\action)+\frac{\A}{\action+n+1}-2 
\] 
from \eqref{eq:First_return}.

  In the rescaled variables, the  transformation \eqref{eq:First_return} takes the form 
  \begin{equation}
  \label{eq:rescaling}
  \tilde{\angle}'=
  \begin{cases}
  \frac{\action}{\A}\angle'=\left(1+\dfrac{1}{\action-1}\right)\tilde{\angle}+\action\psi(\action,\angle),&\angle\in \I_+
  \\
  \frac{\action-1}{\A}\angle'=\tilde{\angle}+(\action-1)\psi(\action,\angle)+1,&\angle\in \I_0
  \\
  \frac{\action-2}{\A}\angle'=\left(1-\dfrac{1}{\action-1}\right)\tilde{\angle}+(\action-2)\psi(\action,\angle)+2-\dfrac{1}{\action-1},&\angle\in \I_-
   \end{cases}
  \end{equation}

\noindent 
{\bf Case $\angle\in \I_+$.} If the action variable $I$ increases, then the  
number $n$ of iterates of the transformation \eqref{eq:pinball} contained in the right half-circle 
(where the action is gained) is greater than the number $n'$ of iterates contained  in the left half-circle.
  Lemma \ref{lm: First_return} implies that $n=n'+1$. Thus, the transformation $\trans$ for these values of $\angle$ takes the form
 \[\angle'=\angle+2\A S_1(\angle,\action) +\dfrac{\A}{\action+n+1}-2,\]
where
\[S_1(\angle, N) = \sum\limits_{k=0}^n \dfrac{1}{N+k}
,\]

\begin{proposition} 
\label{lm:n} Assume that $\mu \in (1,3)$, then  for $\angle \in \I_+$ the number $n$ is given by  
\begin{equation}
\label{eq:right_n}
n = \mu(N-1) - N + \H_\mu(N-1),
\end{equation} where $\mu= \exp(1/\A)$ and 
\[\H_\mu(x)= \chi_{\geqslant 0}(\mu + 2 \{ \mu x \} -3)-\{\mu x \}.\]
\end{proposition}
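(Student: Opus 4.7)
The plan is to compute $n$ at a convenient representative of $\I_+$ and then unfold the resulting floor into the form \eqref{eq:right_n}. By Proposition \ref{lm:rigid}, the number of terms in $S_1$ is constant on $\I_+$, so it suffices to evaluate at a single point, which I take in the limit $\angle \to 0^+$. At this point the defining relations \eqref{eq:sk1} identify $n$ as the unique integer satisfying
\[
\sum_{k=N}^{N+n} \frac{1}{k} \leqslant \frac{1}{\A} = \ln \mu < \sum_{k=N}^{N+n+1} \frac{1}{k}.
\]

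The key analytic step is to replace the harmonic sum by an explicit logarithm through the midpoint Euler--Maclaurin expansion
\[
\sum_{k=N}^{N+n} \frac{1}{k} = \ln \frac{N+n+\tfrac12}{N-\tfrac12} + O(N^{-2}).
\]
Substituting and exponentiating turns the two-sided inequality into $\tfrac{N+n+1/2}{N-1/2} \leqslant \mu < \tfrac{N+n+3/2}{N-1/2}$ up to an $O(N^{-1})$ correction, which rearranges to
\[
n = \Bigl\lfloor \mu(N-1) - N + \tfrac{\mu-1}{2} \Bigr\rfloor.
\]
The sub-leading corrections never flip this floor except when $\{\mu(N-1)\}$ lies within $O(N^{-1})$ of the threshold produced in the next step; those finitely many exceptional $N$ can be handled by perturbing $\angle$ slightly into the interior of $\I_+$, which shifts the right-hand side of \eqref{eq:sk1} by $O(N^{-1})$ without altering the conclusion.

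The final step is purely algebraic. Writing $\mu(N-1) = [\mu(N-1)] + \{\mu(N-1)\}$ and using the hypothesis $1 < \mu < 3$, which places $(\mu-1)/2$ strictly inside $(0,1)$, the quantity $\{\mu(N-1)\} + (\mu-1)/2 \in (0,2)$ crosses $1$ precisely when $\{\mu(N-1)\} \geqslant (3-\mu)/2$, equivalently $\mu + 2\{\mu(N-1)\} - 3 \geqslant 0$. Consequently
\[
\Bigl\lfloor \{\mu(N-1)\} + \tfrac{\mu-1}{2} \Bigr\rfloor = \chi_{\geqslant 0}\bigl(\mu + 2\{\mu(N-1)\} - 3\bigr),
\]
and substituting back yields $n = \mu(N-1) - N - \{\mu(N-1)\} + \chi_{\geqslant 0}(\mu + 2\{\mu(N-1)\} - 3) = \mu(N-1) - N + \H_\mu(N-1)$, which is the formula \eqref{eq:right_n}.

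The main obstacle is quantitative: because $n$ is an integer, the $O(N^{-2})$ Euler--Maclaurin remainder must be controlled precisely enough that it never pushes $n$ past an integer boundary other than the explicit jump built into $\H_\mu$. The saving features are that the leading Bernoulli corrections have definite sign, so the remainder admits a uniform explicit bound, and that the assumption $1 < \mu < 3$ keeps the transition threshold $(3-\mu)/2$ strictly away from the endpoints of $(0,1)$, so that rounding behavior is stable under the $O(N^{-1})$ perturbation. This is also the reason the identical argument should carry through for the general case $\action = N + O(1)$ mentioned in the opening of the proof of Lemma \ref{lm: renormalization_Kesten}.
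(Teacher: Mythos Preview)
Your proof is correct and follows the same overall plan as the paper: reduce to $\angle = 0$ via Proposition~\ref{lm:rigid}, approximate $S_1$ through an Euler--Maclaurin expansion, and read off $n$ from the inequality \eqref{eq:sk1}. The execution differs in a useful way. The paper writes $S_1 = H_{N+n} - H_{N-1}$ with the standard expansion $H_k = \ln k + \gamma + \tfrac{1}{2k} - \cdots$, sets $n = [\mu(N-1)] - N + \gamma$ with $\gamma$ undetermined, and then pushes the $\tfrac{1}{2k}$ and higher corrections through a fairly elaborate perturbation series before isolating the two-sided inequality $1 < \mu + 2\{\mu(N-1)\} - 2\gamma < 3$ that fixes $\gamma$. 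Your midpoint form $\sum_{k=N}^{N+n}\tfrac{1}{k} = \ln\tfrac{N+n+1/2}{N-1/2} + O(N^{-2})$ absorbs those first-order corrections into the logarithm at the outset, so after exponentiating you land directly on $n = \lfloor \mu(N-1) - N + (\mu-1)/2 \rfloor$, and the Heaviside jump drops out of a one-line floor identity. This is a genuine streamlining of the paper's computation; the algebraic unwinding at the end is identical in content to the paper's determination of $\gamma$.

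One point you pass over too quickly: you take $\angle \to 0^+$ as your representative of $\I_+$, but Lemma~\ref{lm:positive-negative} shows that $\angle = 0$ need not lie in $\I_+$; it may belong to the leftmost piece of $\I_0$. The paper addresses this explicitly, arguing that when $0 \in \I_0$ one can increase $\angle$ into $\I_+$ without changing the step count $n$ on the positive half-circle. Your argument implicitly relies on the same fact, and it is worth stating, since Proposition~\ref{lm:rigid} by itself only guarantees constancy of $n$ \emph{within} $\I_+$.
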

\begin{remark}
If $\mu >3$, then an additional term $ \left[\frac {\mu-1}{2}\right]$ has to be added in \eqref{eq:right_n}. 
We omit the detailed calculations of this more general case, which can be reproduced in the same way. 
\end{remark}

\begin{proof}
 By proposition \ref{lm:rigid}, the number of steps
 in the positive half of the cylinder  $n$ is independent of  $\angle\in \I_+$, therefore   it is sufficient 
to verify  \eqref{eq:right_n}  only for some $\angle\in \I_+$ using \eqref{eq:sk1}. Moreover, 
we can take $\angle=0$ even though this point might not be in $\I_+$. Indeed, according 
to lemma \ref{lm:positive-negative} the point $\angle=0$ satisfies 
$\angle\in \I_+$ or $\angle\in \I_0$. If the former, we are done. If the latter then 
it is possible to see that by increasing $\angle$ we will eventually cross into $\I_+$ without 
changing $n$ (the number of steps in positive part of the cylinder).

We have 

\[
S_1(\angle, N) = \sum\limits_{k=0}^{n} \dfrac{1}{N+k} = H_{N+n}-H_{N-1},
\]

where $H_k$ denotes $k$-th harmonic number. The harmonic number has the following asymptotic expansion
\[H_k=\ln k +\gamma +\dfrac{1}{2k}-\sum\limits_{j=1}^{\infty}\dfrac{B_{2j}}{2jk^{2j}}\]
where $B_{j}$ denotes $j$-th Bernoulli number. Therefore, 
\begin{equation}
\label{eq:S_N^k}
\begin{split}
S_1(\angle,N)=\ln \dfrac{N+n}{N-1} &+\dfrac{1}{2}\left(\dfrac{1}{N+n}-\dfrac{1}{N-1}\right)-\dfrac{1}{12}\left(\dfrac{1}{(N+n)^2}-\dfrac{1}{(N-1)^2}\right)+O\left(\dfrac{1}{N^4}\right)
\end{split}
\end{equation}

We will take $n = [\mu(N-1)] - N + \gamma$ and will verify that $\gamma$ is given by the above Heaviside function.

Using the relation $[\mu(N-1)]=\mu(N-~1)-~\{\mu(N-~1)\}$ and denoting  
$\ell:=\dfrac{\gamma}{\mu(N-1)}$ 
and  $y:=~\dfrac{\{\mu(N-1)\}}{\mu (N-1)}$ we can rewrite \eqref{eq:right_n}
\[
 N+n=\mu(N-1)(1-y + l). 
\]
Substituting  this expression into \eqref{eq:S_N^k} we obtain
\begin{equation*}
\begin{split}
S_1(\angle,N)=\ln \mu(1-(y-l)) +\dfrac{1}{2(N-1)}\left(\dfrac{1}{\mu(1-(y-l))  }-1\right) \\
-\dfrac{1}{12(N-1)^2}\left(\dfrac{1}{\mu^2(1-(y-l))^2}-1\right)+O\left(\dfrac{1}{N^4}\right).
\end{split}
\end{equation*}
Now, we expand the expression for $S_1$ in the perturbation series using

\[\ln(\mu(1-(y-l)))=\ln \mu - \sum\limits_{j=1}^{\infty}\frac{(y-l)^j}{j},\qquad
\frac{1}{\mu(1-(y-l))}=\frac{1}{\mu}\sum\limits_{j=0}^\infty (y-l)^j,\]
and
\[
\frac{1}{\mu^2(1-(y-l))^2}=\frac{1}{\mu^2}\sum\limits_{j=0}^\infty (j+1)(y-l)^j.
\]
Collecting all the terms of the same order in $N$ up to $O(y)$ and $O(N^{-3})$
\begin{equation}
\label{eq:S1}
\begin{split}
S_1(\angle,N)=\ln \mu&+\dfrac{1}{2(N-1)}\left(\frac{1+ 2\gamma -2\{\mu(N-1)  \}  }{\mu}-1\right) \\
&-\frac{1}{12(N-1)^2}  \left (  \dfrac{6\gamma (\gamma+1)}{\mu^2} - 1 \right )   +O\left(y,N^{-3}\right).\end{split}
\end{equation}
We need to choose $\gamma$ in such a way that $\A S_1 <1$ but 
$\A S_1 + \frac{1}{N+n+1}>1$.

Recalling that $\A=\dfrac{1}{\ln\mu}$ and that  the leading contribution for 
deviation from 1, is controlled by terms of order $1/(N-1)$, we must assure
\[
1+ 2\gamma -2\{\mu(N-1)\} < \mu,
\]
but
\[
1+ 2\gamma -2\{\mu(N-1) \} + 2 > \mu.
\]
The additional summand 2 comes from
\[
\dfrac{1}{N+n+1} = \dfrac{1}{[\mu(N-1)] + \gamma +1} = \dfrac{2}{2\mu (N-1)}  + O(1/N^2).
\]

We rewrite two inequalities in a more compact form
\[
1< \mu + 2\{\mu(N-1)\}  -2\gamma <3.
\]
Now, if $\gamma =0$, then clearly left side of the inequality holds, since $\mu>1$, and then right side would also hold provided  $\mu + 2\{\mu(N-1)\} < 3$.
Thus, $\gamma =0$ if $\mu + 2\{\mu(N-1)\} < 3$.

If, on the other hand, $\mu + 2\{\mu(N-1)\} > 3$, we can take $\gamma =1$ as the left side of the inequality will hold. The right side will also hold if apply the assumption
 $\mu <3$. Therefore, we can finally conclude $\gamma = \chi_{\geqslant 0}(\mu + 2\{ \mu(N-1) \}-3).$

\end{proof}

Now, we will derive an explicit expression for the first return map $T$. Multiply \eqref{eq:S1} by $2\A$ 
\begin{equation}
\label{eq:S_positive}
\begin{split}
2\A S_1(\angle,N)&=2+\dfrac{\A}{N-1}\left( \dfrac{1+2\H_\mu(N-1)}  
{\mu}-1\right)-\\ &-\dfrac{\A}{6(N-1)^2}\left(\dfrac{6(\, \chi_{\geqslant 0}(N-1) +1)\, \chi_{\geqslant 0}(N-1)}{\mu^2}-1\right)+O\left(y,N^{-3}\right).
 \end{split}\end{equation}
 So we get for $\psi(\angle, N)$
\begin{equation*}
\begin{split} 
&\A\psi(\angle, N)=2 \A S_1(\angle,N) +\dfrac{\A}{\mu(N-1)+\H_\mu(N-1)+1}-2
=\\&=
\dfrac{\A}{N-1}\left( \dfrac{1+2\H_\mu(N-1) }  
{\mu}-1\right)-\dfrac{\A}{6(N-1)^2}\left(\dfrac{6(\, \chi_{\geqslant 0}(N-1) +1)\, \chi_{\geqslant 0}(N-1)}{\mu^2}-1\right)+ \\
&+\dfrac{\A}{\mu(N-1)}\left(1 - \dfrac{\H_\mu(N-1) +1 }{\mu(N-1)}\right)+ O\left(y,N^{-3} \right).
\end{split}
\end{equation*}

\vspace{2mm}

And finally (using that $\chi_{\geqslant 0}^2 = \chi_{\geqslant 0}$), we have 

\begin{equation}
\label{eq:Kesten}
\begin{split}
\angle'=\angle+\alpha\psi(\angle,\action)=& \angle + \frac{\A}{N-1} \left(\dfrac{ 2 + 2\H_{\mu}(N-1)}{\mu}-1\right)-\\
-&\dfrac{\A}{(N-1)^2}\left(\dfrac{\H_\mu(N-1)+2\, \chi_{\geqslant 0}(N-1) +1}{\mu^2}+ \dfrac{1}{6}\right)+ O\left(y,N^{-3}\right).
\end{split}
\end{equation}

Multiplying both sides of the equation \eqref{eq:Kesten} by the scaling factor $\dfrac{N-1}{\A}$ we obtain
\begin{equation}
\label{eq:Kest_renorm}
\begin{split}
\dfrac{N-1}{\A}\angle'=\tilde{\angle}+&   \left(\dfrac{ 2 + 2\H_\mu(N-1)}{\mu}-1\right)-\\
-&\dfrac{\A}{N-1}\left(\dfrac{\H_\mu(N-1)+2\, \chi_{\geqslant 0}(N-1) +1}{\mu^2}- \dfrac{1}{6}\right)+
O\left(y,N^{-2}\right).
\end{split}
\end{equation}

Since $\angle\in \I_+$ we can use  \eqref{eq:rescaling} 
\[
\tilde{\angle}'=\dfrac{N}{\A}\angle'=\left(1+\dfrac{1}{N-1}\right)\dfrac{N-1}{\A}\angle'.
\] 
Thus, in the leading order in the rescaled variables, the transformation 
takes the form

\vspace{2mm}

\[
\tilde{\angle}'=\tilde{\angle}  +   \dfrac{2}{\mu}(1 + \H_\mu(N-1)) -1
\]

\vspace{2mm}

{\bf Case $\angle\in \I_-$.} 
By the direct calculations, one obtains 
\[
\tilde{\angle}'=\tilde{\angle}  + \dfrac{2}{\mu}(1 + \H_\mu(N-2)) -1.
\]
Indeed, moving in the opposite direction, we have 
\[
\frac{\A}{N-1} -\angle = \frac{\A}{N-2} -\angle' + \frac{\A}{N-2} \left(\dfrac{ 2 + 2\H_ \mu(N-2)}{\mu}-1\right).
\]
Neglecting terms of order $O(N^{-2})$ and rearranging the terms , we obtain
\[
\angle' = \angle + \frac{\A}{N-2} \left(\dfrac{ 2 + 2\H_\mu(N-2)}{\mu}-1\right).
\]
Multiplying with $(N-2)/\A$, 
\[
\tilde{\angle}'=\tilde{\angle}  \, \frac{N-2}{N-1} + \left(\dfrac{ 2 + 2\H_ \mu(N-2)}{\mu}-1\right)
\]
we arrive at the above formula since there we neglect terms of order $O(N^{-1})$.

{\bf Case $\angle\in \I_0$.} 
We already know that $\I_0$ is either a single interval with $\I_+$ and $\I_-$ complementing it 
to the full fundamental interval or $\I_0$ consists of three intervals. In the latter case, $\I_+$ and $\I_-$ are located in the interior of the fundamental domain, while three intervals of $\I_0$ complement at the left boundary, right boundary and in the middle between $\I_+$ and $\I_-$.

First, consider the case of the middle interval of $\I_0$ which can be the whole set $\I_0$ or part of it, as explained above. Compared to the $\I_+$ case, $n$, the number of steps in the positive side of the cylinder decreases by 1, {\em i.e.}
\[
n  = \mu(N-1) - N + \H_\mu(N-1)-1.
\]
Next, we follow the previous calculation of the transformation $T$ on $\I_+$ indicating the changes.
First, recall that on $\I_0$, we always have
\[
\angle' = \angle + 2\A S_1(\angle,I) -2.
\]
Thus, we can modify \eqref{eq:S_positive} for our case by subtracting 1 from $\H$
\[
2 \A S_1 = 2 + \frac{\A}{N-1} \left (   \frac{1 + 2 (\H_\mu(N-1) -1)}{\mu} -1 \right ).
\]
Next using $\A \psi = 2\A S_1 -2$, we have
\[
\angle' = \angle + \frac{\A}{N-1} \left (   \frac{1 + 2 (\H_\mu(N-1) -1) }{\mu} -1 \right )
\]
and renormalizing 
\[
\tilde{\angle}'=\tilde{\angle} + \frac{1 + 2 (\H_\mu(N-1) -1)}{\mu} -1.
\]
After slight simplifications, we finally have
\[
\tilde{\angle}'=\tilde{\angle} + \frac{2}{\mu} ( \H_\mu(N-1) -1/2) -1.
\]

Now, we consider the leftmost subinterval of $\I_0$, if it exists. By previous arguments, $n$ is the same as for the $\I_+$, so we have 
\[
2 \A S_1 = 2 + \frac{\A}{N-1} \left (   \frac{1 + 2 \H_\mu(N-1)}{\mu} -1 \right ).
\]
Using again $\A \psi = 2\A S_1 -2$ and following the same calculations, we find
\[
\tilde{\angle}'=\tilde{\angle} + \frac{2}{\mu} ( \H_\mu(N-1) +1/2) -1,
\]

The last case is the rightmost subinterval of $\I_0$ if it exists. Similar argument leads to
\[
\tilde{\angle}'=\tilde{\angle} + \frac{2}{\mu} ( \H_\mu(N-2) +1/2) -1.
\]

\end{proof}

\subsection{Proof of Theorem \ref{th: escape} }
\begin{proof}
If  $\A = \dfrac{1}{\ln n}$ then  $\mu=n$ and therefore $y\equiv 0$. Thus, all the terms in series expansion \eqref{eq:Kest_renorm} depending on $y$ 
vanish and we get only $O(N^{-3})$ terms. In the leading order, the rescaled transformation takes the form 
\[
\tilde{\angle}'=\tilde{\angle}+\left(\dfrac{2\left(1+\left[\frac{\mu-1}{2}\right]\right)}{\mu}-1\right),
\]
for $\tilde{\angle}\in \I_+$.

For our particular choice $\mu=2m$ we have $\left[\frac{\mu-1}{2}\right]=m-1$ and so in the leading order the transformation $\trans$ is the identity map. To simplify an analysis consider only the case $\mu=\ln 2$. Other cases can be treated similarly. Thus, we get
\[\tilde{\angle}'=\tilde{\angle}+\frac{1}{8(N-1)}-\frac{1}{4N-2}+O(N^{-2})\] 

To construct  an unbounded orbit, we search for an initial  point $\tilde{\angle}_0$ satisfying two conditions: 
\begin{itemize}
\item $\chi_+(\tilde{\angle}_0)=1$,
\item The image $\tilde{\angle}'_0$ under the transformation $T$ coincides with initial angle up to the new normalization $\tilde{\tilde{\angle}}_0$. 
\end{itemize}

Note that for a larger action variable the scaling factor would be different. Therefore, we have 
\[\tilde{\tilde{\angle}}_0=\dfrac{N-1}{N}\tilde{\angle}_0=\tilde{\angle}_0-\frac{1}{N}\tilde{\angle}_0\]
and the ``renormalized fixed point'' condition yields
\[\tilde{\angle}_0-\frac{1}{N}\tilde{\angle}_0=\tilde{\angle}_0+\frac{1}{8(N-1)}-\frac{1}{4N-2}+O(N^{-3}).\] 
Thus, up to the order $O(N^{-3})$
\[-\frac{\tilde{\angle}_0}{N}=\frac{4N-2-8N+8}{8(N-1)(4N-2)}=-\frac{2N-3}{8(N-1)(2N-1)}=\]\[=-\frac{1}{8(N-1)}+\frac{1}{4(N-1)(2N-1)}=-\frac{1}{8N}-\frac{1}{8N(N-1)}+\frac{1}{4(N-1)(2N-1)}=\]
\[=-\frac{1}{8N}+\frac{1}{4(N-1)}\left(\frac{1}{2N}-\frac{1}{2N-1}\right)=-\frac{1}{8N}-\frac{1}{8N(N-1)(2N-1)}.\]
As a result,  for $\tilde{\angle}_0=\dfrac 18+O(N^{-3})$ its image under first return map is given by 
 $\tilde{\angle}'_0=\dfrac 18 + O(N^{-3})$.

To check the first statement  $\chi_+(\angle_0)=1$ it is sufficient to show that $\angle_0+2\A S_1+\dfrac{\A}{N+n+1}$ is greater than $2$. This immediately follows from \eqref{eq:S_positive} and the estimate
 \[\angle_0+2\A S_1+\dfrac{\A}{N+n+1}>\dfrac{1}{8N\ln 2}+2-\dfrac{1}{\ln 2(2N-2)}+O(N^{-2})+\dfrac{1}{(2N-1)\ln 2}>2.\]
\end{proof}

\begin{figure}[ht]
\centering
\vspace{-1.3 in}
\includegraphics[width= 4in]{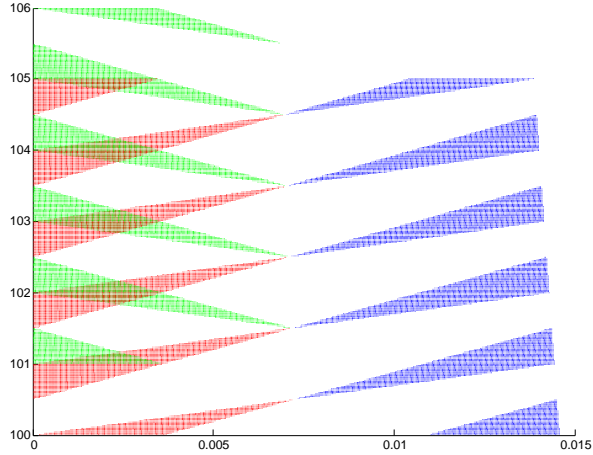}
\vspace{-1.5 in}
\caption{The structure of the fundamental domain for $\A=\dfrac{1}{\log 2}$.  The regions of  positive growth are in red and their images are in green. The regions of negative growth are in blue. }
\end{figure}

\begin{remark}
For 
$\mu= 2m+1$, $m\in \mathbb{N}$ rotation number for the map equals $(\mu-1)/\mu$ which guarantees that any trajectory could not hit $\I_+$ two successive times. Moreover such map has a fixed point in the set $\I_0$. However since renormalization for the neutral set is an identity map, drift produced by the second order term could not be compensated. We conjecture that in this case one could possibly derive an exponential bounds on the rate of action growth for the map $\trans$.     
\end{remark}

\begin{remark}
We conjecture also that  
the same arguments as in Lemmas \ref{lm: First_return} -- \ref{lm: renormalization_Kesten}  can be used for other values of $\Z\in (-1,0)$. Here instead of asymptotic expansion for harmonic numbers one should use  generalized harmonic numbers $H_{n,\Z}$. Since $H_{n,\Z}\to \zeta(k) $ as $n\to\infty$ one can establish the  relation for $H_{(\action+n),\Z}-H_{(\action-1),\Z}=2$ to find the expression for $n$. Arguments in Lemma \ref{lm: First_return}, \ref{lm:positive-negative} and \ref{lm: renormalization_Kesten} can be 
applied with minor changes.
\end{remark}

\appendix
\section{}

In the appendix we give more detailed derivation of some specific problems that lead to $\A\Z$ map.

\subsection*{Example 1: Particle in switching potential}  

Consider a classical particle moving on the line in the square wave potential $V(x) = (-1)^{[x]}$ and assume 
the potential is switched every time unit $V(x,t) = (-1)^{[t]}  \cdot (-1)^{[x]}$. While such potential 
is not differentiable, there is a natural way to define the dynamics by using the energy relation: the kinetic 
energy changes by 2 if the particle passes $t\in {\mathbb Z}$ integer points. We should ignore the singular subset of the extended phase space $(t,\dot x, x)$ where there is discontinuity in both time and space and the dynamics 
is not defined. Such subset has zero measure. Outside the singular set
particle moves with constant speed $v(x,t)=\sqrt{E-V(x,t)}$.
 
 Since the dynamics defined for $x \mod{2}$, $t\mod{2}$ we can project the dynamics on a plane onto the system on a cylinder $[0,2)\times \mathbb{R}_+$. Write the Hamiltonian form of the unit time step transformation for this system. Hamiltonian has a form $H(x,\dot{x},t)=\frac{\dot{x}^2}{2}+V'(x,t)$ and so for the canonical action-angle variables we get
\begin{equation}
\label{eq:Potential_Hamiltonian}
\begin{cases}
\action_{\pm}=\sqrt{2(H-1)}+\sqrt{2(H+1)}\\
\angle_{\pm}=\dfrac 1T\left(\dfrac{x-1}{\sqrt{2(H\pm\sgn(x-1))}}+\dfrac{1}{\sqrt{2(H\mp 1)}}\right)
\end{cases}
\end{equation}
Here  
$T=\left(\dfrac{1}{\sqrt{2(H-1)}}+\dfrac{1}{\sqrt{2(H+1)}}\right)$ is a period of rotation along the level curve $H(x,\dot{x})=H$ and $(\angle_{\pm}, \action_{\pm})$ correspond to the angle-action variables for odd/even values of $[t]$.
To deduce the system in action-angle variables, write down a generating function.
For Hamiltonian we have

\[H(x,\dot{x},t)=H_0(\action,t)+\partial_3 S_{\pm}(x(t),\action,t)\]
Where
\[\begin{cases}
S_+(x,\action)=(x-1)\sqrt{2(\action+\sgn(x-1))}+\sqrt{2(\action-1)}
\\
S_-(\action)=iS_+(-\action)
\end{cases}\]
Or, using \eqref{eq:Potential_Hamiltonian}
\[S_+=\begin{cases}
2T(\action+1)\varphi_+-\dfrac{4}{\sqrt{2(\action-1)}},\, x>1\\
2T(\action-1)\varphi_+,\,x<1
\end{cases}\]
Finally, the total system in variables $(\angle, \action)$ can be deduced from the expression \eqref{eq:Potential_Hamiltonian} and the relation $\varphi_{\pm}=\partial_I S_{\pm}$.
 
 \begin{equation}
\label{eq:AZ_energ}
\begin{cases}
\angle_1=(\angle+\A \sqrt{\action})\mod{2}
\\
\action_1=\action+\sgn(\angle_1-1)
\end{cases}
\end{equation} 
where $\A$ is some constant. 
Clearly, this system can be considered as a particular example of  transformation \eqref{eq:AZ} with $\Z=\frac 12$. 

\subsection*{Example 2: Fermi-Ulam acceleration.}
In Fermi-Ulam problem the particle bounces between two walls. Assume that one wall is at rest $x=0$ and 
the other moves periodically $x=p(t)$, $p(t+1)=p(t)>0$. There is a standard transformation ``stopping'' the wall,
see e.g. \cite{myself} with 
\[
x = p(t) y, \,\,\, \tau = \int_0^t \frac{ds}{p^2(s)} ds.
\] 
In the new variables, the equation takes the form
\[
y^{\prime \prime} + \ddot p p^3 y = 0,
\]
where $\prime$ denotes the derivative with respect to $\tau$. Evaluating the one period map, under the assumption that $p$ is piecewise linear,
we  obtain
\begin{equation}
\begin{cases}
y_2 = ( y_1 + y_1^{\prime} ) \mod{1}  \\
y_2^{\prime} = y_1^{\prime} + y_2 \,  \sgn \, (y_2-1/2).
\end{cases}
\end{equation}

This mapping is not a particular case of $\A\Z$-map but it corresponds to the linear growth 
of the action when $\Z=1$. While showing unbounded growth is relatively easy in this case,
more challenging problem is to estimate the relative measure of bounded solutions. 
This has been done in \cite{DolSim}.

\subsection*{Example 3: Outer billiards with degenerate boundary.}

Consider the outer billiard system. Let $\gamma$ be a smooth strictly convex curve on the plane. Take any point $x\in \mathbb{R}^2$ outside of $\gamma$ and let $l(x)$ be a ray tangent to $\gamma$  and oriented in the counter-clockwise direction. There is another point on the ray $T(x)$ which has the same distance to the tangency point as $x$. This defines the outer billiard map, see e.g.  \cite{serezha}.  A natural question is whether all the orbits are bounded. Thus, one is led  to study this map for large $x$. 

Assume that $\gamma$ is a unit circle centered at the origin, then for large $x$ the square of the map $T^2$ is close to identity and it leaves concentric circles invariant. The angle changes by a factor of $1/|x|$, which corresponds to $\Z=-1$ in the $\A\Z$-map. Now, consider the circle with a small circle segment removed.  Then, the map becomes a small discontinuous perturbation of the above integrable map. When, $\gamma$ is half the circle,
the outer billiard has unbounded orbit, see \cite{genin, DolFay}. 

Earlier, Schwartz \cite{schwartz} constructed unbounded orbits for quadrilateral $\gamma$ . 

\subsection*{Aknowlegments.}  First author (M.A.) was supported by AFOSR MURI grant FA9550-10-1-0567. The research of V.Z. was partially supported by NSF DMS-0807897.

\bibliographystyle{plain}
\bibliography{pinball}

\end{document}